\newtheorem{lemma}{Lemma} \newtheorem{theorem}{Theorem}
\newtheorem{Theorem}{Theorem}
\newtheorem{proposition}{Proposition} 
\newtheorem{corollary}{Corollary} \theoremstyle{definition}
\theoremstyle{remark}
\newtheorem{remark}[theorem]{Remark}
\newcommand\nn\nonumber
\newcommand{\1}{\mathbb 1}
\newcommand{\M}{\mathcal M}
 \newcommand{\R}{\mathbb{R}}
 \newcommand{\F}{\mathcal{F}}
\newcommand{\E}{\mathcal{E}}
\newcommand\LM{L^2 (\mathbb R ^{3N} )}
\newcommand{\norm}[1]{\left\| #1 \right\|}
\renewcommand{\d}{\mathrm{d}}
\begin{document}

\title{The ground state energy of the multi-polaron in the strong coupling limit}

\author[I. Anapolitanos]{Ioannis Anapolitanos}\address{Universit\"at Stuttgart, Fachbereich Mathematik,
70550 Stuttgart, Germany} \email{anapoli@mathematik.uni-stuttgart.de}

\author[B. Landon]{Benjamin Landon}\address{Department of Mathematics and Statistics, McGill
  University, 805 Sherbrooke Street West, Montreal, QC H3A 2K6,
  Canada} \email{benjamin.landon@mail.mcgill.ca}

\date{\today}

\begin{abstract}
We consider the Fr\"ohlich $N$-polaron Hamiltonian  in the strong
coupling limit and bound the ground state energy from
below. In particular, our lower bound confirms that the ground state
energy of the Fr\"ohlich polaron and the
ground state energy of the associated Pekar-Tomasevich variational
problem are asymptotically equal in the strong coupling limit. We generalize the operator approach that was used to prove a similar result in the $N=1$
case in \cite{LT} and apply a Feynman-Kac formula to obtain the same result for an arbitrary particle
number $N \geq 1$.
\end{abstract}

\thanks{\copyright\, 2012 by the authors. This paper may be reproduced, in its entirety, for non-commercial purposes.}

\maketitle

\section{Introduction and Main Results}
\setdescription{font=\normalfont}
The polaron was modelled in 1937 by Fr\"ohlich  in order to explain the phenomenon of electrical breakdown in crystals \cite{Fr}. The model describes a single electron interacting with the phonon field of the lattice of atoms of a polar crystal. In the time since Fr\"ohlich's initial studies, the polaron has found interest not only in condensed matter theory, but also as a simple example of non-relativistic quantum field theory.  `Polaron' refers to the quasiparticle of the electron together with the phonons created by its interaction with the polar crystal. We refer the reader to \cite{AD} and \cite{M} for more details.  

The $N$-polaron is a generalization of the single electron case, and describes $N$ electrons interacting with the phonon field and also each other through the Coulomb force.  There is an asymptotic formula for the ground state energy in the strong coupling limit. In the case $N=1$ it was proposed by Pekar in \cite{P} and proven by Donsker and Varadhan in \cite{DV}. We are interested in proving this formula for polarons of an arbitrary electron number, i.e., when $N >1$.

The Fr\"ohlich Hamiltonian describing a single electron interacting with a polar crystal is given by
\begin{align}
H^{(1)} =p^2 - \sqrt{\alpha} \phi(x) + H_f ,
\end{align}
and acts on the Hilbert space $L^2 (\mathbb{R}^3 ) \otimes \mathcal{F}$, with
$\mathcal{F}$ the bosonic Fock space over $L^2(\mathbb{R}^3)$. Here $p = - i \nabla$ is the
electron momemtum operator and the phonon field energy is
\begin{align}
H_f = \int_{\mathbb{R}^3} \d k a^* (k) a(k)
\end{align}
where $a^*(k)$/$ a(k)$ are the creation/annihilation operators for a
phonon of momentum $k$. The interaction of the crystal modes with
the electron is
\begin{align}
\phi (x) = \frac{1}{\sqrt{2} \pi} \int_{\mathbb{R}^3} \frac{\d k}{| k
|} \left( e^{ikx} a(k) + e^{-ikx} a^* (k) \right).
\end{align}
The coupling constant $\alpha > 0$
describes the strength of the interaction between the electron and
the polar crystal. For a careful definition of $H^{(1)}$ as a self-adjoint operator that is bounded from below, we refer the reader to \cite{N} (see also \cite{MS}). For our methods, this definition is not very important as one can always interpret $H^{(1)}$ as a quadratic form that is bounded from below. We define its ground state energy by $E^{(1)} (\alpha)$, where we have denoted its dependence on $\alpha$.

 The Hamiltonian describing $N$ electrons in a polar crystal (the
`$N$-polaron' or `multipolaron') is given by
\begin{align}\label{eqn:nham}
H^{(N)} _U = \sum_{i=1} ^N \left( p_i ^2  - \sqrt{\alpha} \phi ( x_i
) \right) + H_f + U V_C (X) ,
\end{align}
with $X = (x_1, ..., x_N ) \in \mathbb{R}^{3N}$. Here $p_i = -i
\nabla _{x_i}$ is the momentum operator for the $i$-th electron and
\begin{align}
V_C (X) = \sum_{i< j} \frac{1}{| x_i - x_j |},
\end{align}
is the Coulomb repulsion between the electrons. The Hamiltonian  $H^{(N)}
_U $ acts on $L^2 ( \mathbb{R}^{3N} ) \otimes \mathcal{F}.$ 
Physically, the parameter $U$ is the square of the electron charge
and satisfies $U > 2 \alpha$. However, we will also consider the  regime $U \leq 2 \alpha$.

The ground state energy of $H^{(N)} _U$ is denoted by $E^{(N)}_U
(\alpha)$, that is
\begin{align}
E^{(N)}_U (\alpha) = \inf_{\norm{\psi} = 1 } \langle \psi \vert H_U
^{(N)} \vert \psi \rangle .
\end{align}
It is well known that $E^{(N)}_U (\alpha)>-\infty$ and this is even a simple consequence of our methods. In fact, when $U > 2 \alpha$, the system is stable; that is, the ground state energy is bounded below by a constant times the particle number \cite{FLST}. 

The strong coupling limit, in which $\alpha \to \infty$, was first studied by Pekar in the 1950s \cite{P}. He hypothesized that in this limit, the asymptotic ground state would equal a product state $\psi \otimes \xi$, for $\psi$ an electronic wave function and $\xi$ a phonon wave function. If one computes $\langle \psi \otimes \xi \vert H_U ^{(N)} \vert \psi \otimes \xi \rangle $ using the Pekar ansatz, it is easy to determine the minimizing phonon wave function $\xi$ for a given electronic wave function $\psi$. With this choice of $\xi$, one is led to the $N$-particle Pekar-Tomasevich (PT) functional defined for $\psi \in H^{1} (\mathbb R ^{3N} )$ as
\begin{align} \label{eqn:ptfunct}
\mathcal{P}^{(N)}_U [ \psi ] =\sum_{i=1}^N \int_{\mathbb{R}^{3N}} |
\nabla_{x_i} \psi | ^2 \d X + U \sum_{i<j} \int_{\mathbb{R}^{3N}}
\frac{ |\psi(X) |^2}{|x_i - x_j|} \d X - \alpha \int \int
_{\mathbb{R}^3 \times \mathbb{R}^3} \frac{\rho _{\psi} (x)
\rho_{\psi} (y) }{|x-y|} \d x \d y ,
\end{align}
where
\begin{align}
\rho_{\psi} (x) = \sum_{i=1}^N \int_{\mathbb{R}^{3(N-1)}} |
\psi(x_1, ..., x_{i-1}, x, x_{i+1}, ..., x_N ) |^2 dx_1 ...
\widehat{dx_i} ... dx_N .
\end{align}
The hat indicates that $\d x_i$ is omitted in the integration. We
define the ground state energy of the PT functional to be
\begin{align} \label{eqn:ptenergy}
\mathcal{E} ^{(N)}_U ( \alpha ) = \inf \left\{ \mathcal{P}^{(N)} _U
[ \psi ] : \psi \in H^1 ( \R^{3N} ), \norm{\psi}_{L^2} =
1  \right\}.
\end{align}
In the case $N=1$ we drop the subscript and write $\E^{(1)}(\alpha)$. This case was studied by Lieb \cite{L} who proved that there is a unique minimizer up to translations. The PT functional in the case $N>1$ is studied in \cite{Le}, where the existence of minimizers  is proven. Lewin also addresses the problem of binding in the PT functional. In the regime $U \leq 2 \alpha$ the ground state energy of the PT functional is studied in \cite{BB}; they also sketch the derivation of (\ref{eqn:ptfunct}) from the Pekar ansatz. 

A scaling argument gives
\begin{align}\label{eqn:scaling}
\mathcal{E}_U ^{(N)} ( \alpha ) = \mathcal{E}_{U/ \alpha } ^{(N)} (
1 ) \alpha^2.
\end{align}
Furthermore, because one can construct wave functions where the electrons are arbitrarily far apart from one another, the ground state energy of the PT functional is subadditive:
\begin{align}
\mathcal{E} ^{(N)} _U ( \alpha ) \leq \mathcal{E} ^{(k)} _U ( \alpha
)  + \mathcal{E} ^{(N-k)} _U ( \alpha ) .
\end{align}

 Pekar's hypothesis amounts to the conjecture that in the strong coupling limit, the asymptotic formula
\begin{align}
\lim_{\alpha \to \infty} \alpha ^{-2} E_U ^{(N)} (\alpha ) = \mathcal E ^{(N)} _\nu (1) , \label{eqn:limit}
\end{align}
holds.   Here, $\nu = U / \alpha$ is a fixed constant.

Because it arises using a variational ansatz, $\mathcal E_U ^{(N)} (\alpha )$ is automatically an upper bound for $E^{(N)}_U (\alpha) $. In 1983, Donsker and Varadhan first confirmed (\ref{eqn:limit}) in the case $N=1$ \cite{DV}. However, their proof did not give a rate of convergence and did not easily generalize to other settings. In 1995, Lieb and Thomas developed a simpler proof using operator methods \cite{LT}. They bound $E^{(1)} (\alpha )$ from below by $\mathcal E ^{(1)} (\alpha)$ 
minus an error term which is negligible in the strong coupling limit, thus proving (\ref{eqn:limit}).

In the paper \cite{MS}, Miyao and Spohn treated the strong coupling limit in the bipolaron case (their proof lies in the appendix) using the methodology of \cite{LT}. It was also claimed in \cite{FLST} that the proof in \cite{MS} generalizes to the $N > 2$ case. Unfortunately, the proof in \cite{MS} is slightly incomplete. While not confirming the formula (\ref{eqn:limit}), their proof \emph{does} give a lower bound on the energy of wave functions that describe electrons which are all localized to the same small box in space (the case that their methods do not address is when the electrons are very far apart from one another). Furthermore, this generalizes easily to $N > 2$. In fact, this generalization is a crucial element in our own proof.

It is our contribution to complete the proof using a Feynman-Kac formula, and to make the generalization to the $N > 2$ case. In particular, we prove a lower bound for the ground state energy $E^{(N)}_U (\alpha)$ which implies (\ref{eqn:limit}) for any $N$.

The astute reader will have noticed that we do not impose any symmetry constraints on the electron wave functions.
% symmetry is not important for methods and our lower bound is independent of the underlying symmetry of the wave functions.
  Ideally, one would like to obtain a lower bound for the ground state energy $E^{(N)}_U (\alpha)$ (taken over only antisymmetric 
  wave functions) in terms of $\mathcal E ^{(N)} _U (\alpha)$ but with the infimum in (\ref{eqn:ptenergy}) replaced by an 
  infimum over the antisymmetric wave functions in $H^{1} ( \mathbb R ^{3N} )$. However, our methods do not directly generalize to this setting.

We now state our main result. In particular, it confirms (\ref{eqn:limit}). Recall that the ratio $\nu = U / \alpha$ is fixed.
\begin{Theorem}\label{main} 
\begin{enumerate}[label=(\roman{*}), ref=(\roman{*}),font=\normalfont]
\item For any given $ \nu>2$ and $\epsilon>0$ there exists a constant $C_{\epsilon,\nu}>0$
 such that  for all $ N \in \mathbb{N}$ and  $\alpha > \epsilon N^4 $,
\begin{align}
E_U ^{(N)} ( \alpha )\geq \alpha^2
\mathcal{E}^{(N)}_\nu (1) -C_{\epsilon,\nu} \alpha^{9/5} N^{9/5}.
\label{eqn:main1}
\end{align}
\item  For any given $\nu >0$ and $\epsilon>0$, there exists a
  constant $D_{\epsilon,\nu}>0$ such that for all $N \in \mathbb{N}$ and  $\alpha > \epsilon$,
\begin{align}
E_U^{(N)} ( \alpha )\geq \alpha^2 \mathcal{E}^{(N)}_\nu (1) -D_{\epsilon, \nu}
\alpha^{42/23} N^3. \label{eqn:main2}
\end{align}
\end{enumerate}
\end{Theorem}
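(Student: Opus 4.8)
The plan is to follow the operator method of Lieb--Thomas \cite{LT}, combined with a Feynman--Kac representation, and to glue together two regimes: the "all electrons in one small box" regime (where the Miyao--Spohn argument from \cite{MS}, suitably generalized to $N>2$, applies directly) and the "electrons possibly far apart" regime (which is where \cite{MS} is incomplete). First I would perform an infrared cutoff on the phonon field, discarding the modes with $|k|>\Lambda$ for a parameter $\Lambda$ to be optimized; this costs an error because the interaction $\phi(x_i)$ loses its ultraviolet tail, but the lost energy can be bounded by the usual completion-of-the-square estimate (the dropped modes contribute at most a constant times $\alpha \Lambda^{-1}$ per electron to the energy), and in return the cutoff field $\phi_\Lambda(x)$ has a finite number of "effective degrees of freedom" per unit volume, which is what makes the operator bound work. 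Simultaneously one localizes the electrons into cubes of side length $\ell$ using an IMS-type partition of unity $\{\chi_j\}$ on $\mathbb R^{3N}$; the localization error is of order $N \ell^{-2}$ from the gradient terms.

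Next, on each localization cube one wants to replace $\phi_\Lambda(x_i)$ by a piecewise-constant-in-$x$ approximation $\phi_\Lambda(y_i)$ where $y_i$ is the center of the sub-box containing $x_i$; the error here is controlled by $\|\nabla_x \phi_\Lambda\|$ times $\ell$ (the mesh size), which after taking expectation in the field is of order $\sqrt{\alpha}\,\Lambda^{3/2}\ell$ per electron, and one also pays a commutator/field-energy cost. Once the coupling to the field is (approximately) a function of the discretized electron positions only, the infimum over the field state can be computed explicitly for each configuration: minimizing $-\sqrt{\alpha}\langle \xi | \phi_\Lambda(y) | \xi\rangle + \langle \xi | H_f | \xi\rangle$ over normalized $\xi$ gives exactly $-\alpha$ times a Coulomb-type self-energy of the (smeared, cutoff) electron density, which is the Pekar term up to the cutoff corrections. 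This is the step where the Feynman--Kac formula enters: rather than diagonalizing the discretized Hamiltonian directly (which is awkward when electrons sit in different boxes, since the minimizing field states for different boxes are not orthogonal), one writes the partition function / ground state energy as a path integral over the electron trajectories, integrates out the Gaussian phonon field exactly to get the standard Pekar--Fr\"ohlich double-time integral, and then bounds that below by the PT energy using the convexity/subadditivity structure of $\mathcal E^{(N)}_\nu$ together with the scaling relation \eqref{eqn:scaling}. The stability bound of \cite{FLST} (valid for $\nu>2$) is what prevents the far-apart clusters from contributing spurious negative energy in part (i); in part (ii), where $\nu$ may be small, one instead absorbs the worse cluster behavior into the weaker error exponent $\alpha^{42/23}N^3$.

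I expect the main obstacle to be precisely the far-apart regime and the bookkeeping of how all the error terms depend jointly on $N$, $\alpha$, $\Lambda$, and $\ell$. The errors accumulated are, schematically, $N\ell^{-2}$ (electron localization), $\alpha\Lambda^{-1}N$ (ultraviolet cutoff), $\sqrt{\alpha}\,\Lambda^{3/2}\ell N$ (spatial discretization of the field coupling, possibly after a Cauchy--Schwarz split that also produces a field-energy term one must control against $H_f\ge 0$), plus lower-order commutator terms. Optimizing $\Lambda$ and $\ell$ in $\alpha$ while keeping the $N$-dependence honest is what produces the exponents $9/5$ and $42/23$; getting the cleaner exponent $9/5$ requires the restriction $\alpha>\epsilon N^4$ so that the $N$-dependent pieces are genuinely subleading, whereas dropping that restriction forces the cruder balance in part (ii). A secondary technical point is making the Feynman--Kac step rigorous with the infrared cutoff in place and with the electron trajectories confined to the localization cubes: one needs the Feynman--Kac--Nelson formula for the Fr\"ohlich Hamiltonian and a careful argument that the field integration after discretization still yields a lower bound, not just an approximation, which is where the explicit Gaussian nature of $H_f$ and the positivity of the interaction kernel $1/|x-y|$ are used.
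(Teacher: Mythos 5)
You have the right scaffolding (IMS localization, ultraviolet cutoff, a discretization reducing the field to finitely many modes, parameter optimization, subadditivity of the PT energy), but the core of the argument is mislocated, and what you describe would run into the same obstacle that stalled Miyao--Spohn and that Donsker--Varadhan needed large-deviations machinery to overcome. In the paper the Feynman--Kac formula is \emph{not} the device by which the phonon field is integrated out and the resulting effective action is bounded below by the PT energy; that route is essentially Donsker--Varadhan and does not yield a quantitative rate. Instead, after localizing each electron to an $R$-cube, one first partitions the electrons into clusters $G_1,\dots,G_l$ by chaining together electrons whose cubes are within distance $d$; electrons in different clusters are then at least $d$ apart, and a single cluster $G_i$ sits inside a cube of side $N_i(R+d)$. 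The Feynman--Kac formula is invoked \emph{only} to decouple these clusters: one writes the localized ground-state energy as a path integral with paths confined to the $R$-cubes, integrates out the Gaussian field to obtain the double-time Pekar--Fr\"ohlich action, and bounds the inter-cluster part of the exponent \emph{pointwise along the paths} by $\alpha/d(i,j)-U/(2d(i,j)+4\sqrt{3}R)$ per pair. This shows the full energy is at least the sum of independent cluster energies minus an explicit inter-cluster error, which can be made nonpositive for $\nu>2$ by taking $d\geq 4\sqrt{3}R/(\nu-2)$ (a generalization of Lemma 1 of \cite{FLST}; the FLST \emph{stability} bound plays no role, contrary to what you suggest). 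Only then, within each cluster, does one return fully to the operator picture: ultraviolet cutoff, momentum-space block modes of mesh $P$, and coherent states to relate the block Hamiltonian to the PT functional and apply the scaling relation \eqref{eqn:scaling}. The boundedness of the cluster support, of side $N_i(R+d)$, is exactly what keeps the Lieb--Thomas discretization error under control; also, the paper's discretization is in momentum space (the blocks $B(n)$), not position space, though that is a minor point.

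The structural gap in your proposal is therefore twofold. First, you never explain how to decouple far-apart electrons \emph{before} invoking the Lieb--Thomas machinery; without the cluster construction there is no a priori bound on the diameter of the electron configuration, and the discretization error (which scales with that diameter) cannot be controlled. Second, you have conflated the Feynman--Kac step (which the paper uses solely for inter-cluster decoupling) with the coherent-state step (which actually produces the PT lower bound on each cluster); the path-integral field integration does not, on its own, dominate the action from below by the PT energy, and making it do so is precisely the hard Donsker--Varadhan route the operator method is designed to avoid.
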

The asymptotic formula (\ref{eqn:limit}) allows us to extract information about the Fr\"ohlich Hamiltonian from the PT approximation. In \cite{Le}, it was proven that for any $N$, there is a $\nu(N) >2 $ so that the binding inequality
\begin{align}
\E^{(N)}_\nu (1) < \E^{(N-k)}_\nu (1) + \E^{(k)}_\nu (1) ,
\end{align}
holds for any $\nu \leq \nu(N)$ and $k$. This clearly implies
\begin{corollary}
For any $N$ there is a $\nu(N) > 2$ so that for every $\nu < \nu(N)$, there is an $\alpha(N, \nu) >0$ so that the binding inequality
\begin{align}
E^{(N)}_U (\alpha) < E^{(N-k)}_U (\alpha) + E^{(k)}_U (\alpha) ,
\end{align}
holds for any $\alpha \geq \alpha(N, \nu)$ and any $k$.
\end{corollary}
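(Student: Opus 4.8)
The plan is to deduce the corollary directly from the asymptotic formula (\ref{eqn:limit}), which Theorem~\ref{main} establishes, together with the strict binding inequality for the PT functional quoted from \cite{Le}. The key point is that the limit in (\ref{eqn:limit}) holds after dividing by $\alpha^2$, and the PT ground state energies obey $\mathcal{E}^{(N)}_\nu(1) < \mathcal{E}^{(N-k)}_\nu(1) + \mathcal{E}^{(k)}_\nu(1)$ strictly for $\nu < \nu(N)$; multiplying this strict inequality by $\alpha^2$ and comparing with the three energies $E^{(N)}_U(\alpha)$, $E^{(N-k)}_U(\alpha)$, $E^{(k)}_U(\alpha)$ gives the result once $\alpha$ is large enough that each of the three Fr\"ohlich energies is sufficiently close to its PT counterpart.

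More precisely, I would proceed as follows. Fix $N$ and choose $\nu(N)>2$ as in \cite{Le}. Fix $\nu < \nu(N)$ and set $\delta = \delta(N,\nu) := \mathcal{E}^{(N-k)}_\nu(1) + \mathcal{E}^{(k)}_\nu(1) - \mathcal{E}^{(N)}_\nu(1) > 0$; since there are only finitely many choices of $k \in \{1, \dots, N-1\}$, we may take $\delta$ to be the minimum of these gaps over all such $k$, so $\delta > 0$ depends only on $N$ and $\nu$. By (\ref{eqn:limit}) applied at particle numbers $N$, $N-k$, and $k$ (with the ratio $U/\alpha = \nu$ held fixed), there is an $\alpha(N,\nu)$ such that for all $\alpha \geq \alpha(N,\nu)$ we have
\begin{align}
\left| \alpha^{-2} E^{(m)}_U(\alpha) - \mathcal{E}^{(m)}_\nu(1) \right| < \frac{\delta}{4} \qquad \text{for } m \in \{k, N-k, N\}.
\end{align}
Then for any such $\alpha$ and any $k$,
\begin{align}
\alpha^{-2} E^{(N)}_U(\alpha) < \mathcal{E}^{(N)}_\nu(1) + \frac{\delta}{4} = \mathcal{E}^{(N-k)}_\nu(1) + \mathcal{E}^{(k)}_\nu(1) - \delta + \frac{\delta}{4} < \alpha^{-2} E^{(N-k)}_U(\alpha) + \alpha^{-2} E^{(k)}_U(\alpha) - \frac{\delta}{4},
\end{align}
and multiplying through by $\alpha^2 > 0$ yields the claimed binding inequality $E^{(N)}_U(\alpha) < E^{(N-k)}_U(\alpha) + E^{(k)}_U(\alpha)$.

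There is really no substantial obstacle here: the corollary is a soft consequence of the convergence already proven, and the only mildly delicate point is to make sure the threshold $\alpha(N,\nu)$ can be chosen uniformly in $k$, which is immediate because $k$ ranges over a finite set for fixed $N$ and because one also needs $\alpha(N,\nu)$ large enough to lie in the regime where Theorem~\ref{main} (and hence (\ref{eqn:limit})) applies at each of the particle numbers $k$, $N-k$, $N$. One should also note that the condition $\nu > 2$ (equivalently $U > 2\alpha$) is automatically consistent with $\nu < \nu(N)$ since $\nu(N) > 2$, so the stability regime is respected throughout; this is worth a one-line remark but requires no further argument.
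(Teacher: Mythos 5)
Your proof is correct and is exactly the elaboration of what the paper leaves implicit (the paper simply writes that the Lewin binding result for the PT functional together with \eqref{eqn:limit} ``clearly implies'' the corollary). The only microscopic nit is that your middle ``$=$'' should be ``$\leq$'' once $\delta$ is taken as the minimum of the gaps over $k$, but the chain of inequalities and the uniform choice of $\alpha(N,\nu)$ over the finitely many $k$ are exactly right.
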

In Theorem \ref{main} we have stated two lower bounds which are valid for different relative sizes of $\alpha$ and $N$; both imply (\ref{eqn:limit}). It is an eventual goal to prove a lower bound of the kind (when $\nu > 2 $ and hence the system is stable),
\begin{align}
E^{(N)}_U (\alpha ) \geq\alpha^2 \E^{(N)}_\nu (1) - O (\alpha^{9/5}, N). \label{eqn:conj}
\end{align}
Such a lower bound would imply that one can, in a certain sense, `commute' the two limits $N \to \infty$ and $\alpha \to \infty$ when considering the quantity
\begin{align}
\lim_{\substack{N \to \infty\\  \alpha \to \infty}} \frac{E^{(N)}_U (\alpha )}{N \alpha^2}.
\end{align}
In this sense, (\ref{eqn:main1}) is not optimal in the $N$ dependence of the error term. The $N^{9/5}$ term is close; however, as stated, the bound does not necessarily hold if one takes $N \to \infty$ first. We hope that (\ref{eqn:main1}) is a first step towards (\ref{eqn:conj}).
\begin{comment}
The reason why we do not obtain, in the case $\nu >2$,  a bound that has error terms only linear in $N$ as $N \to \infty$ is due to the fact that our methods treat the two regimes $\nu >2$ and $\nu < 2$ very similarly. In the $\nu < 2$ regime, we have that $E^{(N)}_U (\alpha) = O(N^3)$, and therefore one would not expect that the corrections to the PT energy would be smaller than $N^3$ if $\nu < 2$ and $N$ were large. Since our methods treat the $\nu <2$ and $\nu > 2$ regimes so similarly, it is reasonable that the error terms we obtain are about the same size as $N \to \infty$. \end{comment}

Let us comment on the restriction on $\alpha$  in Theorem 1(i). In the physical regime $\nu >2$ where the PT energy satisfies $\E^{(N)}_\nu (1) =O (N)$, the assumption $\alpha>\epsilon N^4$ is necessary for the error in \eqref{eqn:main1} to be small. What we mean is that, if one rewrites the RHS
  of (\ref{eqn:main1}) as $\alpha^2 N [ \mathcal{E}^{(N)}_\nu (1)/N - C_{\epsilon, \nu} \alpha^{-1/5} N^{4/5} ]$, then it is easy to see that the error term will be small relative to the PT energy only if $\alpha \gg N^4$. Therefore, (\ref{eqn:main1}) is only useful if the assumption $\alpha > \epsilon N^4$ is satisfied, and so this assumption isn't very restrictive anyway.

Comparatively, in the regime $\nu<2$, we have that $\E_U ^{(N)} ( \alpha ) =O( N^3) $ \cite{BB}. Rewriting the RHS of (\ref{eqn:main2}) as $N^3 \alpha^2 [\mathcal{E}^{(N)}_U (1) / N^3 - C_{\epsilon} \alpha^{-4/23} ]$, we see that the corrections to the PT energy are small regardless of the relative size of $N$ and $\alpha$ (as long as $\alpha \gg 1$).
\begin{comment}
\begin{remark}
In the $\nu  >2 $ regime, (\ref{eqn:main2}) can be improved to yield $\alpha^{9/5}$ in the error term instead of $\alpha^{42/23}$ (note that  (\ref{eqn:main2}) holds under weaker assumptions on the relative sizes of $\alpha$ and $N$ than (\ref{eqn:main2})). However, the correction to the PT energy (which is order $N$) will be relatively large unless $\alpha > N^{10}$, and therefore  (\ref{eqn:main2}) will only be of use when the assumptions of (\ref{eqn:main1}) hold anyway. It follows that (\ref{eqn:main2}) does not yield any more information than (\ref{eqn:main1}).
\end{remark}
\end{comment}
\begin{comment}
Note that \eqref{eqn:main2}  is true even when $\nu>2$ and in fact
with the better error $\alpha^{\frac{9}{5}}$ and for weaker
assumption for $\alpha$ than \eqref{eqn:main1}.
  However, due to the $N^3$ growth of the error and the $N$ growth of
$E_U ^{(N)} ( \alpha )$ the relative error is big unless $\alpha>C
N^{10}$. Therefore, for $\nu>2$ \eqref{eqn:main2} does not really
give more information than \eqref{eqn:main1}.
\end{comment}

\subsection{Outline of Proof}
 Our proof generalizes the operator methods developed in \cite{LT}, which treated the $N=1$ case, to polarons of arbitrary electron number. Using a continuous version of the IMS localization formula, we localize the electrons to cubes of a fixed side length in $\mathbb{R}^3$. While we  have localized each of the electrons to their own cube, we do not, a priori, know anything about how far the electrons are from one another. To quantify how spread out they are, we partition the electrons into disjoint clusters.
 Electrons in a single cluster are not too far from each other, and the electrons in different clusters are separated by some minimum distance.

The generalization of the methods in \cite{LT} (and also \cite{MS}) provides a lower bound on the energy of the electrons that depends on how far apart the electrons are from one another; that is, if they are very spread out, then the lower bound is not very useful. What we would like to do is ignore the interaction between electrons lying in different clusters, and apply the methodology of \cite{LT} to obtain a bound on the energy of each individual cluster. Since the electrons in a single cluster will be tightly packed, the bounds given by \cite{LT} will be good enough to handle the energy of the clusters.

In \cite{FLST}, the Feynman-Kac formula was applied in order to bound the interaction of two localized electrons sufficiently far from each other. We generalize this to the $N$-polaron case to  show that we can ignore the interaction between different clusters (at an appropriate error) and treat each cluster as its own subsystem. Subsequently, we sum up the bounds on each cluster energy obtained using \cite{LT}; this will in turn yield Theorem \ref{main}.

{\bf Acknowledgements:} The research of B.L. was partly supported by NSERC. The research of I.A. was supported by the 
 German Science Foundation (DFG). The authors wish to thank R. Seiringer for many useful discussions and comments on a draft of this paper. The authors  are grateful to M. Griesemer and D. Wellig for numerous stimulating discussions and access to a preprint of their paper \cite{GW}.

\begin{comment}The main challenge technical challenge lies when the electrons are
split into multiple clusters which are far apart. If the electrons
are located in only one cluster, then they are localized enough so
that an (easy) generalization of the methods in \cite{LT} provide an
adequate lower bound on the energy of the electrons.\end{comment}
\begin{comment}
If the electrons are in multiple clusters which are far apart then
merely applying the proof in \cite{LT} will not give an adequate
lower bound. Instead, an application of the Feynman-Kac formula
shows we may ignore the interaction between electron clusters (at an
appropriate error), and treat each cluster as its own system.
Subsequently, the energy of each cluster can be bounded using the
methods in \cite{LT}, and subadditivity of the PT functional yields
our final result.
\end{comment}
%In all previous steps the parameters that we introduce depend on $N$ %and $\alpha$ and are open. In the last step we do optimization so
%that we obtain the smallest possible error in the order of $N$ and
%$\alpha$.

\section{Proof of Theorem 1}
{\bf Step 1} (Localization). The first step in proving Theorem \ref{main} is to apply a continuous version of the IMS localization
formula in order to localize each electron into a cube of finite
side length.

Fix $R > 0$ and for $a_i \in \mathbb R^3, i \in \{1,2,...,n\}$ let $C_R(a_1, ..., a_N) =
\left(-R/2, R/2\right)^{3N} +(a_1, ..., a_N )\subseteq
\mathbb{R}^{3N}$. $C_R (a_1, ..., a_N)$ is the $3N$-cube of side
length $R$ centered at $(a_1, ..., a_N)$. Let $\mathcal{B}_R$ be the
set of normalized wave functions for which the support of their
electronic part is contained in $C_R (a_i, ..., a_N)$ for
some $a_i \in \mathbb R ^3$ (i.e. $\psi \in L^2 (C_R (a) ) \otimes \mathcal{F}$ for some $a \in \mathbb R ^{3N}$). The following proposition is proven in \cite{LT} in the case $N=1$, where the IMS localization formula is applied.  We generalize this to arbitrary $N$.
\begin{proposition}\label{prop:localization}
For $R>0$,
\begin{align}\label{eqn:lowerb}
E^{(N)}_U (\alpha)  \geq \inf_{\psi \in \mathcal{B}_R} \langle \psi
\vert H^{(N)} _U \vert \psi \rangle - \frac{3 N \pi^2}{R^2} .
\end{align}
\end{proposition}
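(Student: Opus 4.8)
The plan is to prove Proposition~\ref{prop:localization} by a continuous (integral) version of the IMS localization formula applied only to the electronic variables $X \in \mathbb{R}^{3N}$, leaving the Fock-space part untouched. The key point is that the only term in $H^{(N)}_U$ that is not ``local'' in $X$ — in the sense of not being diagonal in the position representation — is the kinetic energy $\sum_i p_i^2$; the potential $UV_C(X)$, the field energy $H_f$, and the interaction $-\sqrt{\alpha}\sum_i \phi(x_i)$ all commute with multiplication by functions of $X$. So the localization error will come entirely from the kinetic term, and it should be exactly the price of cutting $\mathbb{R}^{3N}$ into cubes of side $R$.

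First I would fix a smooth, compactly supported bump function $j$ on $\mathbb{R}^{3N}$ supported in $(-R/2,R/2)^{3N}$ and normalized so that $\int_{\mathbb{R}^{3N}} j(Y-X)^2 \, \d Y = 1$ for every $X$ (one builds $j$ as a product of one-dimensional bumps and chooses it to be the ``Dirichlet ground state'' type profile so that $\int |\nabla j|^2$ is as small as possible, namely $\int_{\mathbb{R}^{3N}} |\nabla j|^2 = 3N\pi^2/R^2$ up to the optimal constant — this is where the $\tfrac{3N\pi^2}{R^2}$ comes from, $\pi^2/R^2$ per one-dimensional factor, $3N$ factors). For $Y \in \mathbb{R}^{3N}$ set $j_Y(X) = j(X-Y)$, viewed as a multiplication operator on $L^2(\mathbb{R}^{3N}) \otimes \mathcal{F}$ acting on the electronic variables. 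The partition of unity $\int_{\mathbb{R}^{3N}} j_Y^2 \, \d Y = \mathbb{1}$ together with the standard IMS computation gives, for any $\psi$ in the form domain,
\begin{align}
\langle \psi \vert H^{(N)}_U \vert \psi \rangle = \int_{\mathbb{R}^{3N}} \langle j_Y \psi \vert H^{(N)}_U \vert j_Y \psi \rangle \, \d Y - \int_{\mathbb{R}^{3N}} \langle \psi \vert\, |\nabla j_Y|^2 \,\vert \psi \rangle \, \d Y,
\end{align}
where $|\nabla j_Y|^2$ means $\sum_{i=1}^N |\nabla_{x_i} j_Y|^2$ as a multiplication operator; the cross terms from $p_i^2$ cancel because $j_Y$ is real, and all the other pieces of $H^{(N)}_U$ pass through the partition of unity since $\int j_Y^2 \, \d Y = \mathbb{1}$ and they are local in $X$. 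Using $\int_{\mathbb{R}^{3N}} |\nabla j_Y(X)|^2 \, \d Y = \int_{\mathbb{R}^{3N}} |\nabla j|^2 = 3N\pi^2/R^2$ (independent of $X$) bounds the error term by $\tfrac{3N\pi^2}{R^2}\|\psi\|^2$.

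The remaining step is to note that each $j_Y\psi$, after normalization, is (up to the measure-zero set where $\|j_Y\psi\|=0$) an element of $\mathcal{B}_R$: its electronic support lies in $\mathrm{supp}\, j(\cdot - Y) \subseteq C_R(Y)$, a translate of the cube, so $\langle j_Y\psi \vert H^{(N)}_U \vert j_Y\psi\rangle \geq \|j_Y\psi\|^2 \inf_{\phi \in \mathcal{B}_R}\langle \phi \vert H^{(N)}_U\vert\phi\rangle$. Integrating in $Y$ and using $\int \|j_Y\psi\|^2\, \d Y = \|\psi\|^2 = 1$ yields $\langle\psi\vert H^{(N)}_U\vert\psi\rangle \geq \inf_{\phi\in\mathcal{B}_R}\langle\phi\vert H^{(N)}_U\vert\phi\rangle - 3N\pi^2/R^2$; taking the infimum over $\psi$ gives the claim. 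The main technical obstacle is justifying the IMS identity at the level of quadratic forms when $H^{(N)}_U$ is only defined as a form bounded below (one must check the form-domain manipulations and the Fubini-type interchange of the $Y$-integral with the quadratic form are legitimate, e.g. by first working on a form core of nice $\psi$ and then passing to a limit); the optimization of the cutoff profile $j$ to get the sharp constant $3N\pi^2/R^2$ is the other point requiring care, but it reduces to the one-dimensional fact that the lowest Dirichlet eigenvalue of $-\Delta$ on an interval of length $R$ is $\pi^2/R^2$.
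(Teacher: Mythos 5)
Your proposal is correct and uses the same core mechanism as the paper: a continuous IMS localization with the translated Dirichlet-ground-state profile (product of $3N$ one-dimensional sine bumps), yielding the localization error $\int|\nabla j|^2 = 3N\pi^2/R^2$. The only organizational difference is that the paper fixes a near-minimizer $\Phi$, applies the same integral identity to conclude $\int \d Y\langle\psi_Y\Phi\vert H\vert\psi_Y\Phi\rangle = \langle\Phi\vert H\vert\Phi\rangle + \|\nabla\psi\|^2$, and then extracts a single good $Y_0$ by a pigeonhole/mean-value argument, whereas you bound $\langle j_Y\psi\vert H\vert j_Y\psi\rangle$ from below by $\|j_Y\psi\|^2\inf_{\mathcal{B}_R}$ for every $Y$ and integrate; these are two equivalent ways of packaging the same estimate.
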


\begin{proof}
Let $\epsilon > 0$ and $\Phi$ be a normalized wave function such
that $\langle \Phi \vert H_U^{(N)} \vert \Phi \rangle \leq E^{(N)}_U
(\alpha) +\epsilon.$ For real-valued $\psi \in C^{\infty} _0 (C_R
(0) )$ set $\psi_Y (X) = \psi (X-Y).$ A direct calculation leads to
\begin{align}
\int_{\mathbb R ^{3N}} \mathrm{d}Y \langle \psi_Y \Phi \vert H_U^{(N)} \vert \psi_Y \Phi \rangle %&= \sum_i \int dy \langle \psi_y \Phi \vert p_i ^2 \vert \psi_y \Phi \rangle + %\int dy \langle \psi_y \Phi \vert H_M \vert \psi_y \Phi \rangle \\
%&= \sum_i \int dy \langle \Phi \vert p_i ^2 \vert \Phi \rangle | \psi (y)| ^2 + \int dy \langle \Phi \vert \Phi \rangle | \nabla \psi (y) |^2 %+ \int dy \langle \Phi \vert H_M \vert \Phi \rangle | \psi (y) |^2 \\
& = \int_{\mathbb R ^{3N}} \d Y | \psi (Y) |^2 \left( \langle \Phi \vert H_U^{(N)} \vert \Phi \rangle
\right) + \norm{ \nabla \psi }^2_{L^2 ( \mathbb{R}^{3N})}.
\end{align}

 We can take $\psi$ so that  $\norm{\nabla \psi }_{L^2}^2 \leq 3N \pi^2
/R^2 + \epsilon$ and $\norm{\psi}_{L^2} = 1$ (i.e., it approximates the ground state of the Dirichlet Laplacian). For this choice of $\psi$,
\begin{align}
\int_{\mathbb R ^{3N}} \d Y \left[ \langle \psi_Y \Phi \vert H_U^{(N)} \vert \psi _Y
\Phi \rangle - (E^{(N)}_U(\alpha) + 3N \pi^2 /R^2 + 2\epsilon )
\langle \psi_Y \Phi \vert \psi_Y \Phi \rangle \right] \leq 0.
\end{align}
There must be a $Y_0 \in \mathbb{R}^{3N}$ for which the integrand
appearing above is non-positive and $\langle \psi_{Y_0} \Phi \vert
\psi_{Y_0} \Phi \rangle$ is nonzero. Therefore,
\begin{align}
\langle \psi_{Y_0} \Phi \vert H_U^{(N)} \vert \psi_{Y_0} \Phi
\rangle / \langle \psi_{Y_0} \Phi \vert \psi_{Y_0} \Phi \rangle \leq
E_U^{(N)} (\alpha) + 3N \pi^2 /R^2 + 2\epsilon.
\end{align}
Since $\psi_{Y_0} \Phi ( \langle \psi_{Y_0}\Phi \vert \psi_{Y_0}\Phi
\rangle )^{-1/2} \in \mathcal{B}_R$ the claim is proven.
\end{proof}

We now fix a $\Phi \in \mathcal{B}_R$ for an $R>0$ which will
be chosen later. Each of the $N$ electrons described by $\Phi$ is
located in a cube $Q_i$ of side length $R$ centered at some $a_i \in
\mathbb{R}^3.$  The remainder of the proof constitutes finding a
lower bound for $\langle \Phi \vert H^{(N)}_U \vert \Phi \rangle $ which holds independently of $\Phi$; i.e., finding a lower bound for the RHS of (\ref{eqn:lowerb}).

{\bf Step 2} (Partitioning of electrons into clusters). The direct
generalization of \cite{LT} and \cite{MS} to the $N$-polaron case
produces a lower bound on $\langle \Phi \vert H_U ^N (\alpha ) \vert \Phi \rangle$ which depends on how
far apart the localized electrons are from one another. Herein lies
the main difficulty in our proof: we do not, a priori, know that the
cubes $\{Q_i \}_{i=1} ^N$ within which the electrons are contained
are close enough to each other so that this bound is useful. In
order to overcome this challenge, we must quantify how spread out
the electrons are. This is accomplished by partitioning the
electrons into clusters.

Let $d(i,j) =\text{dist} \{ Q_i, Q_j \}$ be the distance between the
cubes containing the supports of the $i$-th and $j$-th electrons. Let
$d>0$ be a fixed constant.  We define the relation $\sim$ on
$\{1,...,N\}$ by $j \sim k$ iff $\exists$ a sequence $( i_1, ..., i_M
) $ with $ i_1 = j$, $i_M = k$ satisfying $d (i_l , i_{l+1}) <d$ for
$1 \leq l < M$. Clearly $\sim$ is a relation of equivalence and we
let $G_1 , ..., G_l$ be the disjoint equivalence classes to which we
will give the suggestive name of `clusters.' The
interpretation of $\sim$ is that we have placed two electrons in
the same cluster iff either they are less than $d$ apart or they are
connected by a sequence in which consecutive electrons are less than
$d$ apart.

There are two important properties of these clusters:
\begin{description}%[before={\renewcommand\makelabel[1]{##1 }}]
\item[(P1)] Let $| G_i | = N_i $ be the cardinality of the cluster $G_i$.
Then it is clear that the union of the supports of the electrons in
$G_i$ is contained in a cube in $\mathbb{R}^{3}$ of side length no
greater than $ N_i ( R + d ).$
\item[(P2)] If the electrons $i$ and $j$ are in two different clusters, then $d(i,j) \geq d.$
\end{description}

Property (P2) will allow us to apply the Feynman-Kac formula to bound the interaction between electrons in different clusters. Property (P1) will allow us to apply the methods of \cite{LT} to bound the remaining energy of each cluster of electrons.
\begin{comment}
The lower bound on the electron energy afforded by \cite{LT} depends crucially on how far apart the electrons are from one another. If the electrons are located in only a single cluster, then they
would all be in a box of side length no greater than $N(R+d).$ In
this case, the proof in \cite{LT} gives a lower bound which would
imply our result. In fact, the methods of \cite{LT} would give a lower bound on $ \langle \Phi \vert H_U^{(N)} \vert \Phi
\rangle$ good enough to prove Theorem \ref{main} if we were able to
simply bound the energy of each cluster $G_i$ separately, using (P1) of the clusters, and then sum up the bounds (ignoring, of course, 
the inter-cluster interaction). The next
step in our proof is to show that we are allowed to do precisely
that.
\end{comment}

{\bf Step 3} (Removing the inter-cluster interaction). %As we pointed before the main difficulty in generalizing
%   the methods of \cite{LT} and \cite{MS} originates from the fact that an
%   arbitrary $\Psi \in \mathcal{B}_R$ may have  more than one clusters. To
%   this end we define
%   \begin{equation}
%D_{R,d,N}=\{\Psi \in L^2(\mathbb{R}^{3N}): \Psi \in \mathcal{B}_R, \text{ and
%} \Psi \text{ consists of one cluster }\}.
%   \end{equation}
%Obviously $D_{R,d,N}$ depends on $d$ as the definition of the
%clusters depends on $d$. Let now
%\begin{equation}
%\widehat{E}_{U,R,d}^{(N)}(\alpha):=\inf_{\Psi \in D_{R,d,N}} \langle
%\Psi, H_U^{(N)} \Psi \rangle.
%\end{equation}
For the cluster $G_i$, we define
\begin{align}
\widehat{E} (G_i) = \inf_{\varphi}\mbox{}'   \langle \varphi \vert
H^{(N_i)}_U \vert \varphi \rangle ,
\end{align}
where the infimum is taken over $N_i -$polaron wave functions
$\varphi$ which have the same support properties as the electrons in
$G_i$. More precisely, if the electrons in $G_i$ are localized in
the cubes $Q_{j_1}, ..., Q_{j_{N_i}}$ then the $\varphi$ appearing in the above infimum are also
supported in the same $N_i$ cubes (i.e., $\varphi \in L^2 (Q_{j_1} \times ... \times Q_{j_{N_i}} ) \otimes \mathcal F$). Physically, $\widehat{E} (G_i)$
may be interpreted as the lowest energy the electrons in the cluster
$G_i$ can have (ignoring, of course, the existence of the other
$N-N_i$ electrons).

%As we have previously mentioned, the methods of \cite{LT} give an adequate bound on $\widehat{E} (G_i)$.
 The following lemma provides a lower bound for $E_U ^{(N)} (\alpha)$ in terms of the cluster energies $\{ \widehat{E} (G_i) \}_i$ and an error term due to the inter-cluster interaction.
\begin{lemma}\label{lem:fk}
For our fixed $\Phi \in \mathcal{B}_R$ we have,
\begin{equation}\label{eqn:Feyngen}
\langle \Phi, H_U^{(N)} \Phi \rangle \geq \sum_{i} \widehat{E}(G_i)
-\sum_{k_1 \neq k_2} \sum_{\substack{ i \in G_{k_1} \\ j \in
G_{k_2}}} \left( \frac{\alpha}{d(i,j)}-\frac{U}{2d(i,j)+4
\sqrt{3}R}\right).
\end{equation}
In particular, we have,
\begin{align}
\langle \Phi, H_U^{(N)} \Phi \rangle \geq \sum_{i} \widehat{E}
(G_i)- \frac{\alpha N^2}{d} \label{eqn:Feyngencons}.
\end{align}
If $ \nu >2$ and $d \geq 4\sqrt{3} R/(\nu-2)$ then,
\begin{align}
\langle \Phi \vert H^{(N)}_U \vert \Phi \rangle \geq  \sum_{i}
\widehat{E} (G_i) \label{eqn:lem1}.
\end{align}
\end{lemma}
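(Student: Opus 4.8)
The three inequalities are not independent. Given \eqref{eqn:Feyngen}, inequality \eqref{eqn:Feyngencons} follows by bounding each of the at most $N^2$ inter-cluster summands of the error term by $\alpha/d(i,j)\le\alpha/d$ using property (P2), and \eqref{eqn:lem1} follows because $\nu>2$ together with $d(i,j)\ge d\ge 4\sqrt3R/(\nu-2)$ gives $2d(i,j)+4\sqrt3R\le\nu\,d(i,j)=U\,d(i,j)/\alpha$, so that each summand $\frac\alpha{d(i,j)}-\frac U{2d(i,j)+4\sqrt3R}$ is non-positive. So the plan is to prove \eqref{eqn:Feyngen}. The first step I would take is a reduction that lets the phonon field be handled by a \emph{confined}-path Feynman--Kac formula: since the electronic part of $\Phi\in\mathcal B_R$ is supported in a product of cubes $Q_1\times\cdots\times Q_N$, lying in the form domain of $H_U^{(N)}$ forces that part into $H^1_0(Q_1\times\cdots\times Q_N)$ (and replacing $\mathcal B_R$ by its form-closure changes no infimum), whence $\langle\Phi|p_i^2|\Phi\rangle=\langle\Phi|(-\Delta^{\mathrm{Dir}}_{Q_i})|\Phi\rangle$ for the Dirichlet Laplacian on $Q_i$. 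Thus $\langle\Phi|H_U^{(N)}|\Phi\rangle=\langle\Phi|\widetilde H|\Phi\rangle$ with $\widetilde H:=\sum_i(-\Delta^{\mathrm{Dir}}_{Q_i})-\sqrt\alpha\sum_i\phi(x_i)+H_f+UV_C$, and $\widehat E(G_k)$ equals the bottom of the spectrum of the corresponding cluster operator $\widetilde H_U^{(N_k)}$ (Dirichlet Laplacians on the relevant cubes) restricted to wave functions with electronic part supported in those cubes.

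Next I would apply the Feynman--Kac formula to $\langle\Phi|e^{-\beta\widetilde H}|\Phi\rangle$ at a fixed $\beta>0$, writing it as an expectation over $N$ independent Brownian motions $b_1,\dots,b_N$ with $b_i$ killed on leaving $Q_i$ (this killing is exactly what the Dirichlet Laplacian produces, so $b_i(t)\in Q_i$ for all $t\in[0,\beta]$ on every contributing path) and with the boson field integrated out; the path weight then carries the factors $\exp\bigl(-U\int_0^\beta V_C(b(t))\,\d t\bigr)$ and $\exp\bigl(\tfrac\alpha2\sum_{i,j}\int_0^\beta\int_0^\beta\frac{e^{-|s-t|}}{|b_i(s)-b_j(t)|}\,\d s\,\d t\bigr)$, besides factors coming from the Fock-space part of $\Phi$. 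Splitting the Coulomb sum and the double sum over $(i,j)$ into intra-cluster and inter-cluster parts, and using that along a contributing path $d(i,j)\le|b_i(s)-b_j(t)|$ and $|b_i(t)-b_j(t)|\le d(i,j)+2\sqrt3R$ whenever $i,j$ lie in different clusters (the diameter of a cube of side $R$ being $\sqrt3R$), together with $\int_0^\beta\int_0^\beta e^{-|s-t|}\,\d s\,\d t\le2\beta$, the total inter-cluster factor $\exp\bigl(\tfrac\alpha2\sum_{\mathrm{inter}}\int\int\cdots-U\int_0^\beta V_C^{\mathrm{inter}}(b(t))\,\d t\bigr)$ is bounded pointwise by $e^{\beta\mathcal R}$, where $\mathcal R$ is precisely the error term of \eqref{eqn:Feyngen}. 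Pulling this factor out of the expectation and recognizing the remaining weight as the Feynman--Kac representation of $\langle\Psi'|e^{-\beta\widetilde H^{\mathrm{dec}}}|\Psi'\rangle$ — where $\widetilde H^{\mathrm{dec}}=\bigoplus_k\widetilde H_U^{(N_k)}$ gives each cluster an independent boson field, and $\Psi'$ is a unit vector with the same (box-supported) electronic part as $\Phi$, assembled from $\Phi$'s field part — yields $\langle\Phi|e^{-\beta\widetilde H}|\Phi\rangle\le e^{\beta\mathcal R}\,\langle\Psi'|e^{-\beta\widetilde H^{\mathrm{dec}}}|\Psi'\rangle$.

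To conclude, observe that $\widetilde H^{\mathrm{dec}}$ is block-diagonal over the clusters, its semigroup preserves wave functions with electronic part supported in $Q_1\times\cdots\times Q_N$, and its bottom of spectrum on that subspace is $\sum_k\widehat E(G_k)$; hence $\langle\Psi'|e^{-\beta\widetilde H^{\mathrm{dec}}}|\Psi'\rangle\le e^{-\beta\sum_k\widehat E(G_k)}$. Combining with Jensen's inequality $\langle\Phi|e^{-\beta\widetilde H}|\Phi\rangle\ge e^{-\beta\langle\Phi|\widetilde H|\Phi\rangle}$, taking $-\beta^{-1}\log$, and recalling $\langle\Phi|\widetilde H|\Phi\rangle=\langle\Phi|H_U^{(N)}|\Phi\rangle$, gives $\langle\Phi|H_U^{(N)}|\Phi\rangle\ge\sum_k\widehat E(G_k)-\mathcal R$, which is \eqref{eqn:Feyngen}.

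The part I expect to be the main obstacle is the rigorous version of the second paragraph: setting up the Feynman--Kac formula for the ultraviolet-singular Fr\"ohlich interaction in the presence of Dirichlet Laplacians and a general Fock-space-valued $\Phi$ (the diagonal $i=j$ self-interaction terms are singular but almost surely finite along Brownian paths), cleanly isolating the inter-cluster factor, and constructing $\Psi'$ so that the leftover weight really is the semigroup of $\widetilde H^{\mathrm{dec}}$. This is the step that upgrades the two-particle estimate of \cite{FLST} to $N$ particles; the only role of the geometry is the confinement of the paths to the cubes — forced by our use of the Dirichlet Laplacians — which is exactly what legitimizes the elementary bounds $d(i,j)\le|b_i(s)-b_j(t)|\le d(i,j)+2\sqrt3R$.
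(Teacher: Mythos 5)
Your implications (\ref{eqn:Feyngen})$\Rightarrow$(\ref{eqn:Feyngencons}) and (\ref{eqn:Feyngen})$\Rightarrow$(\ref{eqn:lem1}) are correct and match the paper, as do the geometric ingredients: confinement of paths to cubes via Dirichlet Laplacians, the bounds $d(i,j)\le|\omega_i(s)-\omega_j(t)|\le d(i,j)+2\sqrt3R$ for electrons in distinct clusters, and the split of the retarded interaction and the Coulomb sum into intra- and inter-cluster pieces. The strategy is the right one. However, the way you invoke the Feynman--Kac formula diverges from the paper's and, as written, has a gap.

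You work at a \emph{fixed} $\beta$ and apply Feynman--Kac to $\langle\Phi|e^{-\beta\widetilde H}|\Phi\rangle$ for a general Fock-space-valued $\Phi\in\mathcal B_R$, then close the argument with Jensen's inequality $e^{-\beta\langle\Phi|\widetilde H|\Phi\rangle}\le\langle\Phi|e^{-\beta\widetilde H}|\Phi\rangle$. The trouble is that the clean path weight $\exp\bigl(\tfrac\alpha2\sum_{i,j}\int\int\frac{e^{-|s-t|}}{|b_i(s)-b_j(t)|}-U\int V_C\bigr)$ is what one gets from integrating out the boson field \emph{when the boson part of the state is the vacuum} (or when one takes a trace). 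For an arbitrary $\Phi$, the finite-$\beta$ Feynman--Kac representation of $\langle\Phi|e^{-\beta\widetilde H}|\Phi\rangle$ carries boundary contributions at $t=0$ and $t=\beta$ from $\Phi$'s Fock-space content; these are not multiplicative factors that can simply be relabelled into a new $\Psi'$, and in particular they do not decompose cluster-by-cluster. Consequently your step where you ``pull out'' the inter-cluster factor and ``recognize the remaining weight as the Feynman--Kac representation of $\langle\Psi'|e^{-\beta\widetilde H^{\mathrm{dec}}}|\Psi'\rangle$'' is not justified; there is no such $\Psi'$ in general. You flag this as a technical obstacle, but it is actually a structural problem with the fixed-$\beta$ formulation.

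The paper avoids this entirely by a small but decisive reordering. Since $\langle\Phi|H_U^{(N)}|\Phi\rangle\ge\inf_{\psi}\langle\psi|H_U^{(N)}|\psi\rangle$ where the infimum ranges over all normalized wave functions with electronic support in $Q_1\times\cdots\times Q_N$ and with \emph{arbitrary} boson content, one only needs a lower bound on this restricted ground-state energy. The Feynman--Kac formula identifies that quantity as $-\lim_{T\to\infty}\frac1T\log Z_{Q_1,\dots,Q_N}(T)$, where $Z$ is the trace-type partition function \eqref{eqn:feynz} built from closed Brownian bridges confined to the cubes, in which the boson field has been fully integrated out once and for all. No reference to $\Phi$'s Fock-space part survives. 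The pointwise bound on the integrand of $Z$ (your intra-/inter-cluster split with the geometric estimates) then immediately gives $\frac1T\log Z_{Q_1,\dots,Q_N}(T)\le\sum_k\frac1T\log Z_{G_k}(T)+T^{-1}\cdot T\,\mathcal R$ with $\mathcal R$ the error term of \eqref{eqn:Feyngen}, and sending $T\to\infty$ gives \eqref{eqn:Feyngen} directly --- no $\Psi'$, no decoupled field, no Jensen. If you replace your second and third paragraphs by this ``infimum first, then $T\to\infty$'' argument, everything else you wrote goes through.
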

\begin{remark}
This is a direct generalization of Lemma 1 in \cite{FLST}, in which
the $N=2$ case is proven.
\end{remark}
\begin{proof}
The Feynman-Kac formula \cite{R} implies that the infimum of the LHS of
(\ref{eqn:Feyngen}) over normalized wave functions localized in $Q_1
\times ... \times Q_N$ equals\begin{align}
 -\lim_{T \to \infty} \frac{1}{T}\log Z_{Q_1, ..., Q_N } (T) ,
\end{align}
where
\begin{align}
Z _{Q_1 , ..., Q_N } (T) = \int _{Q_1} \d x_1 ... \int_{Q_N} \d x_N
\int \d W_{x_1} ^T
( \omega _1 ) ... \d W_{x_N} ^T (\omega_N ) \chi _{Q_1} (\omega _1 ) ... \chi _{Q_N} ( \omega _N )  \nonumber \\
\times \exp \left[ \alpha \int_{\mathbb{R}} \d s \frac{e^{- |s|}}{2} \sum_{i,
j = 1} ^N \int_0 ^T \frac{\d t}{| \omega_i (t) - \omega_j (t+s)|} - U
\sum_{i < j} \int_0 ^T \frac{\d t}{| \omega_i (t) - \omega_j (t)|}
\right].\label{eqn:feynz}
\end{align}
Above, $\d W_{x_j} ^T$ denotes the Wiener measure of closed Brownian
paths  $\omega_i :[ 0,T] \to \mathbb{R}^3$, starting and ending at $x_j$.
 $\chi_{Q_j} ( \omega_j) $ is a characteristic function which is $1$ if $\omega_j$ stays inside the
cube $Q_j$ for all times and $0$ otherwise.

We use our choice of clusters to bound the integrand in the above
equation.
 Specifically, we obtain for paths $\{ \omega_i \}_i$ lying in the cubes $\{ Q_i \}_i$ (the only paths for which the characteristic functions $\chi_{Q_i} (\omega_i )$ are nonzero),
\begin{align}
&\alpha \int_{\mathbb{R}} \d s \frac{e^{- |s|}}{2} \sum_{i, j = 1} ^N \int_0 ^T \frac{\d t}{| \omega_i (t) - \omega_j (t+s)|} - U \sum_{i < j} \int_0 ^T \frac{\d t}{| \omega_i (t) - \omega_j (t)|} \nonumber \\
&\leq  \sum_{k = 1 }^l \left( \alpha \int_{\mathbb{R}} \d s
\frac{e^{-|s|}}{2}
 \sum_{ i, j \in G_k } \int_0 ^T \frac{\d t}{|\omega_i (t) - \omega_j (t+s) |}
  - U \sum_{\substack{ i, j \in G_k \\ i<j }} \int_0 ^T \frac{\d t}{|w_i (t) - w_j (t) |}   \right) \nonumber \\
&+ \sum_{k_1 \neq k_2} T \sum_{\substack{i \in G_{k_1}\\ j \in
G_{k_2}}} \left(\frac{\alpha}{ d (i,j)}-\frac{U}{2d(i,j)+4 \sqrt{3}
R} \right).
%\nonumber \\
 \label{eqn:Feynk}
\end{align}
The terms appearing in the first line of (\ref{eqn:Feynk}) involve
electrons either lying in the same cluster or lying in different
clusters; in the subsequent upper bound, the former have been left unchanged, but the latter have
been bounded above using what we know about the inter-particle
distance. To be precise, we have used that $d(i,j) \leq | \omega_i (t) - \omega_j (t+s ) |
\leq d(i,j) + 2 \sqrt{3} R$ for all $t$ and $s$.

   It follows that
\begin{align}
\frac{1}{T} \log Z_{Q_1 , ... , Q_N} (T) \leq \sum_{k = 1} ^l
\frac{1}{T} \log Z_{G_k} (T) - \sum_{k_1 \neq k_2	} \sum_{\substack{i
\in G_{k_1}\\ j \in G_{k_2}}} \left(\frac{\alpha}{
d(i,j)}-\frac{U}{2d(i,j)+4 \sqrt{3} R} \right), \label{eqn:fk2}
\end{align}
where, if $G_k = \{ i_1, ..., i_{N_k} \}$, then
\begin{align}
Z_{G_k} (T) := Z _{Q_{i_1} , ..., Q_{i_{N_k}} } (T),
\end{align}
and $Z _{Q_{i_1} , ..., Q_{i_{N_k}} } (T)$ is defined in the obvious
way by \eqref{eqn:feynz}. Taking the limit $T \to \infty$ on both
sides of (\ref{eqn:fk2}) yields \eqref{eqn:Feyngen}. Since $U \geq
0$, \eqref{eqn:Feyngencons} follows from \eqref{eqn:Feyngen}.
Solving for the $d$ that makes the terms appearing on the RHS of
(\ref{eqn:Feyngen})
 vanish and recalling property (P2) of the clusters we obtain (\ref{eqn:lem1}).
\end{proof}

{\bf Step 4} (Lower bound for cluster energies). Lemma \ref{lem:fk} tells us that we need only  bound each
$\widehat{E} (G_i)$ from below. To obtain such a bound, we will follow very closely \cite{LT} and \cite{MS}, and our methodology is essentially a straightforward generalization to the $N > 2$ case. We give here the proof for reader convenience, and also refer those interested in more details to these papers.

Our first step in deriving the lower bound will be to replace the
Hamiltonian $H_U^{(N_i)}$ with an ultraviolet cutoff Hamiltonian. For
$K_i>0$ (the value of our ultraviolet cutoff) let
\begin{align}
H^{N_i,U}_{K_i} = & \left( 1 - \frac{8 N_i \alpha}{\pi K_i} \right) \sum_{i=1} ^{N_i} p_i ^2 - 
\sqrt{ \alpha} \sum_{i=1} ^{N_i} \frac{1}{\sqrt{2}\pi} \int _{ |k| < K_i } \frac{\d k}{|k|} 
\left( e^{ikx_i} a(k) + e^{-ikx_i} a^* (k) \right) \nonumber \\
&+ \int_{|k|< K_i}\d k a^* (k) a (k) + \sum_{i <j} \frac{U}{| x_i -
x_j|}. \label{def:cutoff}
\end{align}
We now prove:
\begin{lemma} \label{lem:cutoff}
Let $H^{(N_i)}_U$ be the $N_i$-polaron Hamiltonian as in
(\ref{eqn:nham}) and $H^{N_i,U}_{K_i}$ be the cutoff Hamiltonian as in
(\ref{def:cutoff}). Then,
\begin{align}
H^{(N_i)}_U \geq H^{N_i,U}_{K_i} - \frac{1}{2}. \label{eqn:cutoffbd}
\end{align}
\end{lemma}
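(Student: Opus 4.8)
The plan is to prove the operator inequality \eqref{eqn:cutoffbd} by estimating the difference $H^{(N_i)}_U - H^{N_i,U}_{K_i}$, which consists only of the ``high-momentum'' pieces, and bounding it below by $-1/2$ using a completion-of-the-square argument. Observe first that the Coulomb terms are identical in the two operators, so they drop out entirely; what remains is the high-momentum part of the phonon field energy, namely $\int_{|k|\geq K_i} \d k\, a^*(k) a(k) \geq 0$, which only helps, together with the high-momentum part of the interaction, $-\sqrt{\alpha} \sum_{i=1}^{N_i} \frac{1}{\sqrt 2 \pi} \int_{|k|\geq K_i} \frac{\d k}{|k|} \bigl( e^{ikx_i} a(k) + e^{-ikx_i} a^*(k) \bigr)$, and the kinetic energy difference $\frac{8 N_i \alpha}{\pi K_i} \sum_{i=1}^{N_i} p_i^2$ that has been added back in the cutoff Hamiltonian to compensate. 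So it suffices to show that this surplus kinetic term plus the full (low- plus high-momentum) phonon energy dominates the high-momentum interaction up to $-1/2$; equivalently, it suffices to control the high-momentum interaction alone by the surplus kinetic energy plus the high-momentum phonon energy.

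The key step is a pointwise-in-$k$ completion of the square. For each $i$ and each $k$ with $|k| \geq K_i$, write the interaction contribution against the creation/annihilation pair and complete the square: for any $\epsilon(k) > 0$,
\begin{align}
-\sqrt{\alpha}\,\frac{1}{\sqrt 2 \pi} \frac{1}{|k|}\bigl( e^{ikx_i} a(k) + e^{-ikx_i} a^*(k)\bigr) \geq -\epsilon(k)\, a^*(k) a(k) - \frac{\alpha}{2\pi^2}\frac{1}{\epsilon(k)|k|^2}.
\end{align}
Integrating over $|k| \geq K_i$ and choosing $\epsilon(k)$ so that $\int_{|k|\geq K_i} \epsilon(k)\,\d k \leq 1$ (for instance $\epsilon(k) = c/|k|^4$ with the appropriate constant, or a constant on a shell — the precise choice is a routine optimization) absorbs the $\epsilon(k) a^*(k)a(k)$ term into the high-momentum phonon energy $\int_{|k|\geq K_i} a^*(k)a(k)\,\d k$. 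Then one is left with a $c$-number lower bound of the form $-\frac{\alpha}{2\pi^2}\int_{|k|\geq K_i} \frac{\d k}{\epsilon(k)|k|^2}$, which scales like $\alpha/K_i$ times a universal constant. Comparing with the surplus kinetic term $\frac{8 N_i \alpha}{\pi K_i}\sum p_i^2 \geq 0$ — actually one does not even need the kinetic term for the $c$-number part — one sees that for $K_i$ large this is $\geq -1/2$; but since $K_i$ is a free parameter chosen later, the cleanest route is to optimize $\epsilon(k)$ to get the constant $8/\pi$ in front of the kinetic compensation and the $-1/2$ constant simultaneously. A more careful version of the completion of the square keeps the $p_i$-dependence: one can instead complete the square ``relative to $p_i^2$'' by using the operator $(p_i - \text{(field shift)})^2 \geq 0$, which is exactly the pull-through / dressing transformation idea behind the $8N_i\alpha/(\pi K_i)$ coefficient, and this is the version that actually appears in \cite{LT,MS}.

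I would carry this out in the order: (1) subtract the two Hamiltonians and note the Coulomb and low-momentum-phonon terms cancel or are favorable; (2) isolate the high-momentum interaction and perform the completion of the square pointwise in $k$, either the simple ``against $a^*a$'' version or the sharper ``against $p_i^2$'' version; (3) choose the auxiliary function $\epsilon(k)$ (equivalently fix the split between how much is absorbed by phonon energy and how much by kinetic energy) to make the leftover $c$-number exactly $-1/2$ and the kinetic coefficient exactly $8N_i\alpha/(\pi K_i)$; (4) sum over $i = 1, \dots, N_i$, noting the factor $N_i$ arises because each of the $N_i$ particles contributes an interaction term but the completion of the square against a single field mode must be shared. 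The main obstacle is bookkeeping in step (4): one must be careful that completing the square for $N_i$ different interaction terms against the \emph{same} bosonic modes $a(k), a^*(k)$ produces cross terms, and the standard trick is either Cauchy–Schwarz (costing the factor $N_i$) or distributing the phonon energy $\int a^*a$ among the $N_i$ particles with weight $1/N_i$ each — this is precisely where the linear-in-$N_i$ loss in the kinetic coefficient comes from, and getting the constants ($8/\pi$, $1/2$) to match requires the standard but slightly delicate optimization already done in \cite{LT}.
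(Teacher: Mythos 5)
Your first proposed route — complete the square pointwise in $k$ against the phonon number operator — does not work, and the reason is the heart of why this lemma requires a nontrivial idea. To absorb $\int_{|k|\geq K_i}\epsilon(k)\,a^*(k)a(k)\,\d k$ into the phonon energy $\int_{|k|\geq K_i} a^*(k)a(k)\,\d k$ you need $\epsilon(k)\leq 1$ \emph{pointwise} (not $\int\epsilon\leq 1$ as you wrote), and then the leftover $c$-number is
\begin{align}
-\frac{\alpha}{2\pi^2}\int_{|k|\geq K_i}\frac{\d k}{\epsilon(k)\,|k|^2}
\ \leq\ -\frac{\alpha}{2\pi^2}\int_{|k|\geq K_i}\frac{\d k}{|k|^2}\ =\ -\infty.
\end{align}
The integral diverges at large $|k|$ precisely because the Fr\"ohlich form factor $1/|k|$ is not square integrable; no choice of $\epsilon(k)$ fixes this as long as the square is completed only against $a^*a$. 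This is the ultraviolet problem that makes the lemma nontrivial, and it is why the kinetic energy has to enter the estimate, not as an optional sharpening but as a necessity.

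Your second sketch — ``complete the square relative to $p_i^2$,'' i.e.\ the Gross/Lieb--Thomas dressing idea — is indeed the correct direction and is what the paper does, but your description stops short of the key structural step, which you would need to write down to have a proof. The paper introduces the operators
\begin{align}
Z^{(i)}_j = \sqrt{\alpha} \int_{|k| \geq K} \d k\, \frac{k_j\, e^{ik \cdot x_i}}{\sqrt{2}\,\pi\, |k|^3}\, a(k),
\end{align}
chosen so that the commutator identity
\begin{align}
\sum_{i=1}^N\sum_{j=1}^3 \bigl[ p_{ij},\, Z^{(i)}_j - Z^{(i)*}_j \bigr] = H_I
\end{align}
reproduces exactly the high-momentum interaction $H_I$. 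The extra $1/|k|^2$ built into $Z$ makes $\sum_j[Z^{(i)}_j, Z^{(i)*}_j]=2\alpha/(\pi K)$ finite and $\boldsymbol{Z}^{(i)*}\boldsymbol{Z}^{(i)}\leq (2\alpha/\pi K)N_{\geq K}$ controllable, which is precisely what repairs the divergence above. One then bounds the commutator by Cauchy--Schwarz as $|\langle [p,Z-Z^*]\rangle|\leq \epsilon\langle p^2\rangle+\frac{2}{\epsilon}\langle -( \boldsymbol{Z}-\boldsymbol{Z}^*)^2\rangle$, chooses $\epsilon=8N\alpha/(\pi K)$, and sums over particles; the $N$-dependence of the kinetic coefficient comes from distributing a single $N_{\geq K}$ among the $N$ particles, as you correctly anticipated in your step (4). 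So the bookkeeping concern you flagged is real and handled exactly as you guessed, but the commutator identity and the definition of $Z$ are the missing concrete ingredients without which your argument does not close.
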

\begin{remark}
By completing the square, the phonon variables can be eliminated from $H^{N,U}_K$ as in \cite{LT}, for some cut-off $K$. This leaves only the positive kinetic energy, positive Coulomb energy and a negative constant proportional to $K$. Therefore, applying Lemma \ref{lem:cutoff} to $H_U^{(N)}$  shows that  $E_U^{(N)} (\alpha) > - \infty$.
\end{remark}
\begin{proof}
For notational simplicity we write $N$ and $K$ instead of $N_i$ and $K_i$, and relabel the electrons in the cluster $G_i$ as $1, ..., N$.
We define for each electron the vector operator
$\boldsymbol{Z^{(i)}} = (Z^{(i)}_1,Z^{(i)}_2,Z^{(i)}_3) $, where
\begin{align}
Z^{(i)}_j = \sqrt{\alpha} \int_{|k| \geq K} \d k \frac{k_j e^{ik \cdot
x_i}}{\sqrt{2} \pi |k|^3} a(k) .
\end{align}
A standard calculation yields
\begin{align}\label{eqn:commid}
\sum_{i=1}^N \sum_{j=1}^3 \left[ p_{ij} ,  Z^{(i)}_j - Z^{(i)*}_j
\right] = \sqrt{ \alpha} \sum_{i=1}^N \int_{ |k| \geq K }
\frac{\d k}{\sqrt{2} \pi |k|}\left[ e^{ik\cdot x_i} a(k) + e^{-ik
\cdot x_i} a^* (k) \right] =: H_I,
\end{align}
where the rightmost equality defines $H_I$. Above, $p_{ij}$ is the $i$-th electron momentum in direction $j$. We now bound the LHS of
(\ref{eqn:commid}). Fix $\epsilon  > 0.$ For arbitrary normalized
$\varphi$,
\begin{align}
\left| \sum_{j=1}^3 \langle \varphi\vert \left[ p_{ij} ,  Z^{(i)}_j - Z^{(i)*}_j \right] \vert \varphi \rangle \right| &
\leq 2 \langle \varphi \vert p_i ^2 \vert \varphi \rangle ^{1/2}\langle \varphi \vert- (\boldsymbol{Z^{(i)}} - \boldsymbol{Z^{(i) *}})^2\vert \varphi\rangle ^{1/2} \nonumber \\
& \leq 2 \langle \varphi \vert p_i ^2 \vert \varphi \rangle ^{1/2} \langle \varphi \vert 2(\boldsymbol{Z^{(i) *}} \boldsymbol{Z^{(i)}}+\boldsymbol{Z^{(i)}} \boldsymbol{Z^{(i)*}} ) \vert \varphi \rangle ^{1/2} \nonumber\\
& \leq \epsilon \langle \varphi \vert p_i ^2 \vert \varphi \rangle + \frac{2}{\epsilon} \langle \varphi \vert \boldsymbol{Z^{(i) *}} \boldsymbol{Z^{(i)}}+
\boldsymbol{Z^{(i)}} \boldsymbol{Z^{(i)*}}  \vert \varphi \rangle \nonumber \\
& =  \epsilon \langle \varphi \vert p_i ^2 \vert \varphi \rangle +
\frac{4}{\epsilon} \langle \varphi \vert \boldsymbol{Z^{(i) *}} \boldsymbol{Z^{(i)}} \vert
\varphi \rangle + \frac{1}{\epsilon} \frac{4 \alpha}{\pi  K}.
\label{eqn:commbd2}
\end{align}
In the last line we have used
\begin{align}
\sum_{j=1} ^3 \left[ Z^{(i)}_j, Z^{(i) *} _j \right] = \alpha
\int_{|k| \geq K} \frac{\d k}{2 \pi^2 |k|^4} = \frac{2 \alpha}{ \pi
K}.
\end{align}
   Additionally, the standard upper bound
\begin{align}
 \boldsymbol{Z^{(i) *}} \boldsymbol{Z^{(i)}} \leq \frac{2 \alpha}{\pi
K}  N_{\geq K} , \label{eqn:numbd}
\end{align}
holds, with
\begin{align}
N_{\geq K} = \int_{|k| \geq K} \d k a^* (k) a(k).
\end{align}

At this point, we take $\epsilon = 8 N \alpha / \pi K.$ With $H_I$
defined as in (\ref{eqn:commid}), the bounds (\ref{eqn:commbd2}) and
(\ref{eqn:numbd}) imply
\begin{align}
| \langle \varphi \vert H_I \vert \varphi \rangle | \leq \frac{8N
\alpha}{\pi K} \langle \varphi \vert \sum\nolimits_i p_i ^2\vert
\varphi \rangle + \langle \varphi \vert N_{\geq K} \vert \varphi
\rangle + \frac{1}{2} ,
\end{align}
from which it follows that
\begin{align}
H^{(N)}_U - H^{N,U}_K = \frac{8 N \alpha}{\pi K} \sum_i p_i ^2 +
N_{\geq K} - H_I \geq - \frac{1}{2}.
\end{align}
The above inequality proves Lemma \ref{lem:cutoff}.
\end{proof}

Let us focus on a single cluster $G_i$. We prove the following bound for the cluster energies:
\begin{proposition}\label{prop:cluster}
For the cluster $G_i$ let $P_i$ and $\delta_i$ be positive
constants, with $2\delta_i < 1$. Assume that the ultraviolet cut-off $K_i$ satisfies $8 \alpha N_i  < \pi K_i $. Then,
\begin{align}
\widehat{E} (G_i) \geq F(N_i) := \alpha^2
\mathcal{E}_{\widetilde{\nu_i}}^{(N_i)} (1) \frac{1}{(1 - 2\delta_i)^2 (1 -
\frac{8 \alpha N_i}{\pi K_i}) }
 - \frac{9 \alpha (R+d)^2 N_i^4 P_i^2 K_i}{ \pi  \delta_i} - (\frac{2K_i}{P_i}+1)^3 -\frac{1}{2} \label{eqn:clusterbound}
\end{align}
with $\widetilde{\nu_i} = U(1 - 2\delta_i ) / \alpha $.
\end{proposition}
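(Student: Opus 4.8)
The plan is to reduce to the ultraviolet-cutoff Hamiltonian via Lemma~\ref{lem:cutoff}, then discretize its finitely supported phonon modes and carry out the argument of \cite{LT} (see also \cite{MS}) to extract the Pekar--Tomasevich functional, keeping track of the dependence on $N_i$ at each step. Write $N=N_i$, $K=K_i$, $P=P_i$, $\delta=\delta_i$ and relabel the electrons of $G_i$ as $1,\dots,N$. By \eqref{eqn:cutoffbd}, $\widehat E(G_i)\ge\inf_\varphi\langle\varphi\vert H^{N,U}_K\vert\varphi\rangle-\tfrac12$, where $\varphi$ ranges over $L^2(Q_1\times\dots\times Q_N)\otimes\F$ with $Q_1,\dots,Q_N$ the side-$R$ cubes containing the electrons of $G_i$. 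By (P1) these cubes sit inside a single cube of side $L:=N(R+d)$ in $\R^3$, and since the Fr\"ohlich Hamiltonian is translation invariant (a simultaneous translation of all electrons together with the field is unitarily implemented) I may assume this cube is centred at the origin, so $|x_\ell|\le\tfrac{\sqrt3}{2}L$ on $\operatorname{supp}\varphi$ for every $\ell$.

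The next step is to discretize the low-frequency modes. Partition $\{|k|<K\}$ into $M\le(2K/P+1)^3$ cubes $B_\mu$ of side $\le P$ with centres $k_\mu$, and replace $e^{\pm ikx_\ell}$ in \eqref{def:cutoff} by $e^{\pm ik_\mu x_\ell}$ for $k\in B_\mu$. The resulting error is a phonon operator $\sum_\ell\bigl(a(h_{x_\ell})+a^*(h_{x_\ell})\bigr)$ with $h_{x_\ell}(k)=\tfrac{\sqrt\alpha}{\sqrt2\,\pi|k|}\bigl(e^{ikx_\ell}-e^{ik_{\mu(k)}x_\ell}\bigr)\1_{\{|k|<K\}}$; the bound $|e^{ikx_\ell}-e^{ik_\mu x_\ell}|\le|k-k_\mu|\,|x_\ell|\le\tfrac34 PL$ on $\operatorname{supp}\varphi$ gives $\norm{h_{x_\ell}}_{L^2}^2\le\tfrac{9\alpha P^2L^2K}{8\pi}$, and then the standard estimate $\norm{a(g)\varphi}\le\norm{g}_{L^2}\norm{\mathcal N^{1/2}\varphi}$ ($\mathcal N$ the phonon number operator) together with Cauchy--Schwarz shows this error is $\ge-\delta\bigl(\sum_\ell p_\ell^2+\int_{|k|<K}a^*(k)a(k)\,\d k\bigr)-\tfrac1\delta\bigl(\sum_\ell\sup_x\norm{h_x}_{L^2}\bigr)^2$, the last term being $\le\tfrac{9\alpha(R+d)^2N^4P^2K}{\pi\delta}$ once $L=N(R+d)$ is inserted. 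Absorbing the first part into the kinetic and phonon terms of $H^{N,U}_K$ reduces the problem, up to the additive error $-\tfrac{9\alpha(R+d)^2N^4P^2K}{\pi\delta}$, to the finite-mode Hamiltonian
\[
 \bigl(1-\tfrac{8N\alpha}{\pi K}-\delta\bigr)\sum_{\ell}p_\ell^2+(1-\delta)\sum_{\mu=1}^{M}\widetilde a_\mu^*\widetilde a_\mu-\sqrt\alpha\sum_\mu\bigl(\overline{G_\mu(X)}\,\widetilde a_\mu+G_\mu(X)\,\widetilde a_\mu^*\bigr)+U\sum_{i<j}\frac1{|x_i-x_j|},
\]
where $\widetilde a_\mu$ annihilates the normalized mode $c_\mu^{-1}|k|^{-1}\1_{B_\mu}$, $c_\mu^2=\int_{B_\mu}|k|^{-2}\,\d k$, and $G_\mu(X)=\tfrac{c_\mu}{\sqrt2\,\pi}\sum_\ell e^{-ik_\mu x_\ell}$.

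Finally I would bound this finite-mode operator following \cite{LT,MS}. Fix $\varphi$, set $n_\varphi(X)=\norm{\varphi(X,\cdot)}_\F^2$ and $\psi:=\sqrt{n_\varphi}\in H^1(\R^{3N})$; then $\norm{\psi}_{L^2}=1$, $\rho_\psi=\rho_\varphi$, $\langle\varphi\vert\sum_{i<j}|x_i-x_j|^{-1}\vert\varphi\rangle=\int|\psi|^2\sum_{i<j}|x_i-x_j|^{-1}$, and $\langle\varphi\vert\sum_\ell p_\ell^2\vert\varphi\rangle\ge\norm{\nabla\psi}_{L^2}^2$ from the pointwise inequality $\norm{\nabla_{x_\ell}\varphi(X)}_\F\ge|\nabla_{x_\ell}\norm{\varphi(X)}_\F|$. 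For the phonons I would use the displaced-oscillator bound $\sum_\mu\widetilde a_\mu^*\widetilde a_\mu\ge\sum_\mu(\overline{f_\mu}\widetilde a_\mu+f_\mu\widetilde a_\mu^*)-\sum_\mu|f_\mu|^2$ (valid for any $f_\mu\in\C$) with the \emph{state-dependent} choice $f_\mu=\sqrt\alpha\langle\varphi\vert G_\mu(X)\vert\varphi\rangle=\sqrt\alpha\,\tfrac{c_\mu}{\sqrt2\,\pi}\widehat{\rho_\varphi}(k_\mu)$ --- legitimate because $\varphi$ is fixed before $f_\mu$ is picked --- so that, on taking the $\varphi$-expectation, the term $-\sum_\mu|f_\mu|^2=-\alpha\sum_\mu\tfrac{c_\mu^2}{2\pi^2}|\widehat{\rho_\varphi}(k_\mu)|^2$ appears, which (up to a further discretization error of the same type) is $\ge-\alpha\iint\tfrac{\rho_\psi(x)\rho_\psi(y)}{|x-y|}\,\d x\,\d y$, i.e.\ the direct term of the Pekar--Tomasevich functional and \emph{not} the divergent one-body self-energy --- this cancellation is the crux of the method. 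The leftover terms $-2\sqrt\alpha\sum_\mu\re\bigl(\langle\varphi\vert G_\mu(X)\widetilde a_\mu^*\vert\varphi\rangle-\langle G_\mu(X)\rangle_\varphi\langle\widetilde a_\mu^*\rangle_\varphi\bigr)$ would be handled by Cauchy--Schwarz together with $\langle\widetilde a_\mu\widetilde a_\mu^*\rangle_\varphi=\langle\widetilde a_\mu^*\widetilde a_\mu\rangle_\varphi+1$ (the ``$+1$'' per mode, summing over $\mu\le(2K/P+1)^3$, giving the penultimate term of \eqref{eqn:clusterbound}) and the observation that, because each electron is confined to a side-$R$ cube, the variance of $G_\mu(X)$ in $\varphi$ is at most a constant plus a small multiple of $\sum_{i<j}|x_i-x_j|^{-1}$; these remainders get absorbed into another $\delta$-fraction of the kinetic and phonon energies and into the Coulomb term, which is where the factors $(1-2\delta)$ and the shifted ratio $\widetilde\nu_i=U(1-2\delta)/\alpha$ originate. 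Collecting everything, factoring $(1-2\delta)(1-\tfrac{8N\alpha}{\pi K})$ out of the kinetic and Coulomb parts, applying the scaling relation \eqref{eqn:scaling} and the definition \eqref{eqn:ptenergy}, and using $\E^{(N)}_{\widetilde\nu_i}(1)\le0$, would yield \eqref{eqn:clusterbound}.

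The hard part is this last step: faithfully generalizing the \cite{LT} lower bound to the finitely-many-mode \emph{multi}-polaron confined to a box --- in particular arranging the $c$-number shift and the error bookkeeping so that only the Pekar direct term survives (a one-body self-energy, of size $\sim\alpha K$, would be fatal) while every error stays explicitly polynomial in $N_i$, a dependence \cite{LT} never has to track.
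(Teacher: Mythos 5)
Up through the reduction to a finite‑mode ``block'' Hamiltonian you are essentially following the paper (reduction to $H^{N,U}_{K}$ via Lemma~\ref{lem:cutoff}, partition of $B_K$ into $P$-cubes, replacement $e^{ikx}\to e^{ik_{B(n)}x}$ at cost $9\alpha N^4P^2(R+d)^2K/(\pi\delta)$, block modes $A_n$; the minor deviation of putting a $\delta$ into the kinetic coefficient as well is harmless). The divergence is in the last step, where you extract the PT functional from the block Hamiltonian, and that is exactly where your sketch does not close.

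The paper's mechanism is the Lieb--Thomas coherent-state argument. Because the block Hamiltonian acts nontrivially only on $\F(\M)$ with $\dim\M=|\Lambda_P|<\infty$, one can insert the coherent-state resolution of identity \eqref{eqn:intid}; this turns $A_n$ into the $c$-number $\xi_n$ \emph{identically}, so $\langle\Psi|\widetilde H^{Bl}_K|\Psi\rangle$ becomes a classical functional of $\Psi_\xi(X)$ and $\xi$ with \emph{no} position--phonon cross-correlation left over, and completing the square in $\xi$ produces the direct Pekar term exactly, with the $-|\Lambda_P|$ error coming from $\int(|\xi_n|^2-1)|\xi\rangle\langle\xi|=A_n^*A_n$. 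The paper also uses a second, subtle trick: after restricting to $\F(\M)$ it spends another $\delta$ of block phonon energy to pass from $(1-\delta)$ to $(1-2\delta)$, which lets it take a \emph{supremum over the choice of} $\{k_{B(n)}\}$ inside the bound; the resulting $\inf_{k_{B(n)}}|\hat\rho_\xi(k_{B(n)})|^2\le\frac{1}{|B(n)|}\int_{B(n)}|\hat\rho_\xi(k)|^2\d k$ replaces the pointwise Riemann sum by the true integral $\int|\hat\rho_\xi|^2|k|^{-2}\d k$ with \emph{no} additional error term.

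Your replacement for this -- the displaced-oscillator bound $A^*A\ge \bar f A+fA^*-|f|^2$ with $f_\mu=\sqrt\alpha\,\langle\varphi|G_\mu|\varphi\rangle$ -- correctly yields $-\alpha\sum_\mu|\langle G_\mu\rangle_\varphi|^2$, which is indeed a Riemann sum for the direct Pekar term. But two genuine gaps remain. (i) The $A$-terms do \emph{not} cancel: you are left with the position--phonon covariances $\langle\bar G_\mu A_\mu\rangle_\varphi-\langle\bar G_\mu\rangle_\varphi\langle A_\mu\rangle_\varphi$, and your claim that $\operatorname{Var}_\varphi(G_\mu)$ is controlled by ``a constant plus a small multiple of $\sum_{i<j}|x_i-x_j|^{-1}$'' is not right: the variance of $G_\mu$ is dimensionless and is controlled purely by the confinement, giving $\operatorname{Var}(G_\mu)\lesssim c_\mu^2 N^2|k_\mu|^2R^2$, and summing $\sum_\mu c_\mu^2|k_\mu|^2\sim K^3$ produces an error of size $\sim\alpha K^3 N^2R^2/\delta$ that is \emph{not} of the form appearing in \eqref{eqn:clusterbound} (different powers of $K,P$); one would need to redo the parameter optimization and it is not clear the same exponents come out. (ii) Passing from $-\alpha\sum_\mu c_\mu^2|\hat\rho_\varphi(k_\mu)|^2$ to $-\alpha\iint\rho\rho/|x-y|$ is a Riemann-sum approximation whose sign is not automatically favourable; you flag it as ``a further discretization error of the same type'' but this is precisely the point the paper handles exactly via the supremum over $\{k_{B(n)}\}$, and that device is not available in your formulation because you never restrict to $\F(\M)$ or free up a second $\delta$. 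In short: the reduction to the block Hamiltonian is right and in line with the paper, but the coherent-state step is the engine of the proof and your substitute for it both contains an incorrect variance claim and leaves the discretization of the Pekar term unjustified.
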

\begin{remark}
The meaning of the constants $P_i$ and $\delta_i$ will be made clear below. 
The values of all the parameters introduced in Proposition \ref{prop:cluster} will be chosen later (each
claim of Theorem \ref{main} requires a different choice of constants). Our definition of $F(N_i)$ is suggestive; we are thinking of the lower bound for $\widehat{E} (G_i)$ as a function of the particle number $N_i$. Indeed, we will later see that the parameters $\delta_i, K_i$, etc. will depend only on $N_i$ (and $\alpha$, of course).
\end{remark}
\noindent \emph{Proof.} %For notational simplicity, we drop the
%subscripts denoting the cluster $G_i$ until the end of the proof of Proposition
%\ref{prop:cluster} (i.e., we let $N_i = N$, $K_i = K$, etc).
For notational simplicity of the proof we will omit the subscripts where the meaning is clear and write $N$ instead of $ N_i$, $K$ instead of $K_i$, etc.
In order to get a lower bound on $\widehat{E} (G_i)$ we fix an
arbitrary normalized wave function $\Psi$ with the required support
properties. By Lemma \ref{lem:cutoff} it suffices to consider the cut-off Hamiltonian as defined in (\ref{def:cutoff}). We will find a lower bound for $\langle \Psi \vert
H^{N,U}_K \vert \Psi \rangle$ which is independent of $\Psi$. Recall that our ultraviolet cut-off is denoted by $K>0$.
We will now split the sphere $B_K :=\{k \in \mathbb{R}^3: |k| \leq
K\}$ into cubes of length $P_i=P$ (whose value will be determined
later). For $ n = (n_1 , n_2 , n_3 ) \in \mathbb{Z}^3$ we define
$D_P (n) = [n_1 P - P/2, n_1 P + P/2] \times[n_2 P - P/2, n_2 P + P/2]
\times[n_3 P - P/2, n_3 P + P/2] \subset \mathbb{R}^3,$ the cube
with center $Pn$ and side length $P.$ Let $\Lambda _P = \left\{ n
\in\mathbb{Z}^3 : D_P (n) \cap B_K \neq  \emptyset \right\}$. For the
cardinality of $\Lambda_P$, denoted $| \Lambda _P |$, we have
\begin{align}
|\Lambda_P| \leq  (\frac{2K}{P}+1)^3.
\end{align}
For $n \in \Lambda_P$ set $B(n) = D_P (n) \cap B_K$, and let $k_{B(n)}$
be any point in $B(n).$ The electrons described by $\Psi$ are
localized in a cube of side length $N(R+d)$.
 By applying a unitary transformation (i.e., translating the system) we may assume that the cube is centered at the
origin. Then, using that the diameter of a cube is $\sqrt{3}$ times
its side length,  we obtain for $x_i$ in the support of $\Psi$
that
\begin{align}
|e^{ik \cdot x_i} - e^{ik_{B(n)} \cdot x_i}| \leq |k -
k_{B(n)}||x_i| \leq \frac{3P N}{2} (R+d). \label{eqn:quick1}
\end{align}
For $\delta=\delta_i > 0$, we have by completing the square,
\begin{align}
\sum_{n \in \Lambda_P}& \bigg[ - \sqrt{\alpha} \sum_{i=1}^N \int_{B(n)} \frac{\d k}{\sqrt{2}\pi |k|} \left( e^{i k \cdot x_i} - e^{i k_{B(n)} \cdot x_i} \right) a(k)+\left( e^{-i k \cdot x_i} - e^{-i k_{B(n)} \cdot x_i} \right) a^* (k)  \nonumber \\
&+ \delta \int_{B(n)} dk a^* (k) a(k)  \bigg] \nonumber \\
& \geq - \frac{\alpha N}{2 \pi ^2 \delta} \sum_{n \in \Lambda_P} \int_{B(n)} \frac{\d k}{|k|^2} \sum_{i=1}^N |e ^{ik x_i } -e^{i k_{B(n)} x_i} |^2 \geq -\frac{9  \alpha N^4 P^2  (R+d)^2 K}{2 \pi
\delta}. \label{eqn:loc2}
\end{align}
  Therefore,
\begin{align}
\langle \Psi \vert H_K ^{N,U} \vert \Psi \rangle \geq \langle \Psi
\vert \widehat{H}_{K} \left( \{ k_{B(n)}\} \right) \vert \Psi \rangle
-\frac{9  \alpha N^4 P^2 (R+d)^2 K}{2 \pi \delta},\label{eqn:quick2}
\end{align}
with
\begin{align}
& \widehat{H}_{K} \left( \{ k_{B(n)}\} \right) = ( 1 - \frac{8 \alpha N}{\pi K}  ) \sum_{i=1}^N p_i ^2 + U \sum_{i<j} \frac{ 1}{|x_i - x_j| } \nonumber \\
&+ \sum_{n \in \Lambda_P} \left[ (1- \delta) \int_{B(n)} \d k a^* (k)
a(k) - \sqrt{\alpha} \sum_{i =1}^N \int_{B(n)} \frac{\d k}{\sqrt{2}
\pi |k|} \left( e^{ik_{B(n)} \cdot x_i} a(k) + e^{-ik_{B(n)} \cdot
x_i} a^* (k) \right) \right].
\end{align}

      Define the block annihilation operator $A_n$ for $n \in \Lambda_P$ by
\begin{align}
A_n = \left( \int_{B(n)} \frac{\d k}{2 \pi^2 |k|^2} \right)
^{-1/2} \int_{B(n)} \frac{\d k}{\sqrt{2} \pi |k|} a(k).
\end{align}
 Each $A_n$ is a normalized boson mode satisfying $\left[ A_n, A_{n'}
^* \right] = \delta_{nn'}.$ By the Schwarz inequality, $A^* _n A_n
\leq \int\nolimits_{B(n)}\d k a^* (k) a(k) .$ Therefore $\widehat{H}_K
\left( \{ k_{B(n)} \} \right) \geq H^{Bl} _K \left( \{ k_{B(n)} \} 
\right)$, with the block Hamiltonian defined by
\begin{align}
H^{Bl}_K &  \left( \{ k_{B(n)} \} \right)  =(1 - \frac{8 \alpha N}{\pi K} ) \sum_{i=1} ^N p_i ^2 + U \sum_{i <j} \frac{1}{|x_i -x_j| } \nonumber \\
&+ \sum_{n \in \Lambda_P} \left[ (1- \delta) A^* _n A_n - \sqrt{
\alpha } \left( \int_{B(n)} \frac{\d k}{2 \pi^2 |k|^2
}\right)^{1/2}\sum_{i=1}^N \left( e^{i k_{B(n)}\cdot x_i} A_n +
e^{-ik_{B(n)} \cdot x_i}A^*_n \right) \right].\label{eqn:blockham}
\end{align}

        In summary, we have proven
\begin{align}
\langle \Psi \vert H_K ^{N,U} \vert \Psi \rangle \geq \langle \Psi
\vert H^{Bl}_K &  \left( \{ k_{B(n)} \} \right) \vert \Psi \rangle
-\frac{9  \alpha N^4 P^2 (R+d)^2 K}{2 \pi \delta}.\label{eqn:loc3}
\end{align}
     Note that the above inequality holds for any choice of $\{ k_{B(n)}
\}$. Our goal is now to find a lower
bound for $\langle \Psi \vert H_K^{Bl} ( \{ k_{B(n)} \} ) \vert \Psi \rangle$. A lower bound that is independent of $\Psi$ will in turn imply a lower bound for $\widehat{E}(G_i)$, by the above inequality and Lemma \ref{lem:cutoff}.

We will eventually use coherent states to relate the block Hamiltonian to the PT functional. Before doing so we would like to make one observation regarding
the block Hamiltonian. $H_K ^{Bl} (\{ k_{B(n)} \}) $  contains terms involving creation and annihilation operators of only finitely many functions. If we denote 
the span of these functions by $\mathcal{M}$, then we note that the bosonic Fock space $\mathcal{F} = \mathcal{F} ( L^2 (\mathbb{R}^3) )$ (where
$\mathcal{F} ( \mathcal H )$ denotes the bosonic Fock space over the Hilbert space $\mathcal H$) is unitarily equivalent to 
$\mathcal{F} ( \mathcal M ) \otimes \mathcal{F} ( \mathcal M ^\perp )$. Here, we recall
that $\mathcal{F} (H_1 \oplus H_2 ) \cong \mathcal{F} (H_1) \otimes \mathcal{F} (H_2)$.

Furthermore, because the block Hamiltonian contains no terms associated with the functions in $\mathcal M^\perp$, we can identify it with the
 operator $H^{Bl}_K (\{ k_{B(n)} \} ) \otimes \mathbb 1_{\mathcal F (\mathcal M ^\perp ) }$ which acts on the Hilbert space 
 $ L^2 (\mathbb R^{3N} ) \otimes \mathcal{F} ( \mathcal M) \otimes \mathcal F ( \mathcal M ^\perp )$. Here, we are thinking
  of $H_K^{Bl} ( \{ k_{B(n)} \} )$ defined by (\ref{eqn:blockham}) as an operator acting on 
  $L^2 (\mathbb R^{3N} ) \otimes \mathcal F ( \mathcal M ) $. Since the operators $A \otimes \1$ and $A$ have the same ground state energy, it is therefore sufficient to obtain a lower bound for the ground state energy of  $H_K^{Bl} ( \{ k_{B(n)} \} )$ acting on $L^2 (\R^{3N} ) \otimes \F (\M)$,  for a choice of $\{k_{B(n)} \} = \{k(1)_n\}$ which we now fix. Moreover, since the unitary transformation taking $L^2 (\R ^{3N} ) \otimes \F$ to $L^2 (\R^{3N} ) \otimes \F (\M) \otimes \F (\M^\perp )$ acts only on the phonon variables, we can still assume that the wave functions that $H_K ^{Bl} ( \{k(1)_n\})$ act on have the same support properties as before - that is, they are localized in a box at the origin.

We now fix an arbitrary wave function in $L^2 (\R^{3N} ) \otimes \F (\M)$ which we continue to denote by $\Psi$ which has the required support properties. We claim that instead of finding a lower bound for the energy of $\Psi$ using the $\{k(1)_n\}$ we fixed previously, we can instead take the supremum of $\langle \Psi \vert H^{Bl}_K ( \{ k_{B(n)} \} ) \vert \Psi \rangle$ over all choices $\{k_{B(n)}\}$, at the cost of replacing $(1 - \delta)$ by $(1 - 2 \delta)$ in the definition of $H^{Bl}_K (\{ k_{B(n)} \})$ and incurring an error term which is equal to the constant appearing in (\ref{eqn:loc3}). What we mean by this is the following. Let $\{ k_{B(n)} \}$ be another choice of the fixed points in each $B(n)$. The arguments of (\ref{eqn:quick1}) - (\ref{eqn:quick2}), applied to the block oscillators $A^\#(k)$ instead of the $a^\#(k)$, show that
\begin{align}
\langle \Psi \vert H_K^{Bl} ( \{k(1)_n\}) \vert \Psi \rangle \geq \langle \Psi \vert H_K^{Bl} (\{ k_{B(n)} \} ) \vert \Psi \rangle - \delta \sum_{n \in \Lambda_P } \langle \Psi \vert A^*_n A_n \vert \Psi \rangle - \frac{9  \alpha N^4 P^2 (R+d)^2 K}{2 \pi \delta}. \label{eqn:last2}
\end{align}
We can even take the supremum on the RHS of (\ref{eqn:last2}) over choices of $\{k_{B(n)} \}$. Define $\widetilde{H}_K^{Bl} (\{ k_{B(n)} \})$ by the RHS of (\ref{eqn:blockham}) except with the coefficient $(1 - 2\delta)$ in front of the block number operator instead of $(1 - \delta)$. We will now prove the bound
\begin{align}
\sup_{ \{ k_{B(n)} \} } \langle \Psi \vert \widetilde{H}_K^{Bl} (\{ k_{B(n)} \} ) \vert \Psi \rangle \geq \frac{\alpha^2
\mathcal{E}_{\widetilde{\nu}}^{(N)} (1) }{(1 -2 \delta)^2 (1 -
\frac{8 \alpha N}{\pi K}) } - (\frac{2K}{P}+1)^3  \label{eqn:last1}
\end{align}
Let us see that such a lower bound implies Proposition \ref{prop:cluster}. Since the RHS of (\ref{eqn:last1}) does not depend on $\Psi$, this, together with (\ref{eqn:last2}), implies a lower bound for the ground state energy of $H_K^{Bl} ( \{k(1)_n\} ) $. Recall that this operator is acting only on wave functions in the Hilbert space $L^2 (\R) \otimes\F (\M)$ that have the aforementioned support properties. As we have previously argued, this implies a lower bound for $\widehat{E} (G_i)$. Collecting (\ref{eqn:cutoffbd}), (\ref{eqn:loc3}), (\ref{eqn:last2}) and (\ref{eqn:last1}) we see that the bound we obtain is in fact
\begin{align}
\widehat{E} (G_i) \geq F(N) ,
\end{align}
with $F(N)$ defined as in (\ref{eqn:clusterbound}), which proves Proposition \ref{prop:cluster}.

%For obtaining a lower bound on $\langle \Psi \vert H_K^{Bl} ( \{ k_{B(n)} \} ) \vert \Psi \rangle$, we can therefore, without loss of generality, assume that %$\Psi \in L^2 (\mathbb R^{3N} ) \otimes \mathcal F ( \mathcal M ) $.
\begin{comment}
For each $m \in \Lambda_P$ we have that
\begin{align*}
A_m^\#=a^\#(f_m),
\end{align*}
where $f_m=\frac{\frac{1}{|.|} \chi_{B(m)}}{\|\frac{1}{|.|}
\chi_{B(m)}\|}$.
 This roughly suggests that $H_{K}^{Bl}$
acts nontrivially only on the space $L^2(\mathbb{R}^{3N}) \otimes
\mathcal{F}(M)$, where $\mathcal{F}(M)$ is the Fock space
 associated to $M$ and
 \begin{equation}
 M=\text{span} \{f_m: m \in \Lambda_P\}.
 \end{equation}
To make this observation rigorous we need the unitary transformation
$V: \mathcal{F} \rightarrow \mathcal{F}(M) \otimes
\mathcal{F}(M^\bot)$ defined uniquely through
\begin{equation}\label{unitdef}
V \Omega= \Omega_M \otimes \Omega_{M^\bot},
 $$$$ V a^*(h)=(a^*(h_1) \otimes I+ I \otimes a^*(h_2)) V,
\end{equation}
where $h=h_1+h_2$ with $h_1 \in M, h_2 \in M^\bot$ and  the rest of
the notation is selfexplanatory. After applying this unitary
transformation we can show that we may assume that $\Psi \in
L^2(\mathbb{R}^{3N}) \otimes \mathcal{F}(M) \otimes
\mathcal{F}(M^\bot)$ and that $H_K^{Bl}$ acts nontrivially only on
$L^2(\mathbb{R}^{3N}) \otimes \mathcal{F}(M)$ and identically on
$F(M^\bot)$. Therefore, to find a lower bound of $H_K^{Bl}$ it
suffices to consider funtions $\Psi \in L^2(\mathbb{R}^{3N}) \otimes
\mathcal{F}(M)$.
\end{comment} 

We have left only to prove (\ref{eqn:last1}). For a given $\xi=(\xi_n)_{n \in \Lambda_P} \in \mathbb C ^{| \Lambda_P |}$ we introduce the
coherent state in $\mathcal{F}(\mathcal M)$ defined by
\begin{equation}
| \xi \rangle := \prod_{n \in \Lambda_P} e^{\xi_n A_n^*-
\overline{\xi}_n A_n} \Omega_{\mathcal M} ,
\end{equation}
with $\Omega_{\mathcal M}$ the vacuum state in $\mathcal F (\mathcal M)$. With this definition, we have that
\begin{equation}\label{eqn:intid}
\int \d \xi \d \xi^* | \xi \rangle \langle \xi|= \mathbb 1,\quad \int \d \xi \d \xi ^* \xi_n \vert \xi \rangle \langle \xi \vert = A_n ,
 \quad \int \d \xi \d \xi ^* ( |\xi_n |^2 -1 ) \vert \xi \rangle \langle \xi \vert=A^* _n A_n,
\end{equation}
where $\d \xi \d \xi ^*= \prod_{n \in \Lambda_P} \pi ^{-1} \d \xi_n \d\xi_n ^*$ and $\d \xi_n \d \xi_n ^* $ denotes Lebesgue measure. %Note that $\vert \xi %\rangle$ is a finite tensor product
% of coherent states of individual harmonic oscillators in $\mathcal{F}(\mathcal M)$. % Here it is important that we have omitted the Fock space $\mathcal F ( \mathcal M^\perp )$. The above equalities hold only for the part of Fock space spanned by the oscillators appearing in the block Hamiltonian; it is not true for $\varphi \in \mathcal F ( \mathcal M^\perp )$. 

\begin{comment}
 We will illustrate how we treat the term
 $\langle  \Psi, A_m^* A_m  \Psi \rangle= \langle  \Psi, a^*(f_m) a(f_m)   \Psi \rangle$ and
 other terms of $H_K^{Bl}$ can be treated similarly and more easily.
 By \eqref{eqn:intid}
 \begin{align*}
 & \langle  \Psi, a^*(f_m) a(f_m)   \Psi \rangle %\\
 = %&
   \langle  \Psi, (a(f_m) a^*(f_m)-1)  \Psi \rangle %\\
  = %&
  \int d \xi d \xi^*  \langle  \Psi, (a(f_m) |\xi \rangle \langle \xi| a^*(f_m)-1)  \Psi \rangle
 \end{align*}
By the fact that $|\xi \rangle$ is a coherent state we can obtain
that
\begin{align*}
a(f_m) |\xi \rangle= \xi_m | \xi \rangle.
\end{align*}
From the last two equations we obtain that
\begin{align}\label{psiam2}
\langle  \Psi, a^*(f_m) a(f_m) \Psi \rangle=\int d \xi d \xi^*
\langle \Psi_\xi, (|\xi_m|^2-1)  \Psi_\xi \rangle,
\end{align}
\end{comment}
Let $\Psi_\xi:= \langle  \Psi, \xi \rangle_{\mathcal{F}(\mathcal{M})} \in 
L^2(\mathbb{R}^{3N})$ % \otimes \mathcal{F}(\mathcal{M}^\bot)$
  be the wave function obtained by integrating out the phonon variables. With the identities (\ref{eqn:intid}) we have,
 \begin{align}
\langle \Psi \vert \widetilde{H}_K^{Bl}(\{ k_{B(n)}\}) \vert \Psi \rangle=\int \d \xi \d \xi^* \langle
\Psi_\xi \vert h_\xi \left( \{ k_{B(n)} \} \right) \Psi_\xi
\rangle_{\LM}\label{eqn:fin2},
 \end{align}
where
\begin{align}
h_{\xi}& \left( \{ k_{B(n)} \} \right) = (1- \frac{8 \alpha N}{\pi K} ) \sum_{i=1} ^N p_i ^2 + U \sum_{i<j} \frac{1}{|x_i- x_j |} \nonumber \\
&+\sum_{n \in \Lambda_P } \left[( 1- 2\delta) (|\xi_n|^2 -1 )
-\sqrt{\alpha} \left( \int_{B(n)} \frac{dk}{2 \pi^2 |k|^2} \right)
^{1/2} \sum_{i=1}^N \left(\xi_n e^{i k_{B(n)}\cdot x_i} + \xi_n ^*
e^{-i k_{B(n)} \cdot x_i} \right)  \right].
\end{align}
 Completing the square we obtain
\begin{align}
& \langle \Psi_{\xi} \vert h_{\xi} \left( \{k_{B(n)} \} \right) \vert \Psi_{\xi} \rangle \geq \langle \Psi_{\xi}\vert   (1- \frac{8 \alpha N}{\pi K} )\sum_{i=1} ^N p_i ^2 + U \sum_{i<j} \frac{1}{|x_i- x_j |} \vert \Psi_{\xi} \rangle \nonumber \\
&- \sum_{n \in \Lambda_P} \left[\frac{4 \pi \alpha}{1 -2 \delta}\int_{B(n)} \d k \frac{| \hat{\rho}_{\xi} (k_{B(n)}) |^2}{|k|^2 \norm{\Psi_{\xi}}^2_{\LM}}  \right] - (1 - 2\delta) \norm{\Psi_{\xi}}^2_{\LM} |\Lambda_P |,
\end{align}
where $\hat{\rho}_{\xi} (k) =(2 \pi)^{-3/2} \int\nolimits
_{\mathbb{R}^3}dx e^{ik \cdot x} \rho _{\xi} (x) $ is the Fourier
transform of
\begin{align}
\rho_{\xi} (x) = \sum_{i=1}^N \int_{\mathbb{R}^{3(N-1)}} |
\Psi_{\xi} (x_1, ..., x_{i-1}, x, x_{i+1}, ..., x_M ) |^2 \d x_1 ...
\widehat{\d x }_i ... \d x_N.
\end{align}
 We have the inequality
\begin{align}
\inf_{\{k_{B(n)} \}}\int_{B(n)} \d k \frac{| \hat{\rho}_{\xi} (k_{B(n)}) |^2}{|k|^2 \norm{\Psi_{\xi}}^2_{\LM}} \leq \int_{B(n)} \d k \frac{| \hat{\rho}_{\xi} (k) |^2}{|k|^2 \norm{\Psi_{\xi}}^2_{\LM}}
\end{align}
\begin{comment}\begin{align}
k \to \int \frac{ | \hat{\rho}_{\xi} (k)
|^2}{\norm{\Psi_{\xi}}^2_{\LM}}  \d \xi \d \xi^*
\end{align}
\end{comment}
 and so,
\begin{align}
& \sup_{\{k_{B(n)} \}} \int \d \xi \d \xi ^* \langle \Psi_{\xi} \vert h_{\xi} \left( \{ k_{B(n)} \} \right) \vert \Psi_{\xi} \rangle_{\LM} \nonumber \\
\geq& \int \bigg\{\langle \Psi_{\xi} \vert (1- \frac{8 \alpha N}{\pi K} )\sum_{i=1} ^N p_i ^2 + U \sum_{i<j} \frac{1}{|x_i- x_j |} \vert \Psi_{\xi} \rangle \nonumber \\
&- \frac{4\pi \alpha}{1 - 2\delta}\int_{\mathbb{R}^3} \d k \frac{| \hat{\rho}_{\xi} (k) |^2}{|k|^2 \norm{\Psi_{\xi}}^2_{\LM}}  - (1 - 2\delta) \norm{\Psi_{\xi}}^2_{\LM} |\Lambda_P | \bigg\} \d\xi \d\xi ^*  \nonumber \\
=& \int  \d \xi \d \xi ^*  \norm{\Psi_{\xi}}^2_{\LM} (1- \frac{8 \alpha N}{\pi K} ) \bigg[\langle \widetilde{\Psi}_{\xi} \vert \sum_{i=1} ^N p_i ^2 + \frac{U}{1-  \frac{8 \alpha N}{\pi K} } \sum_{i<j} \frac{1}{|x_i- x_j |} \vert \widetilde{\Psi}_{\xi} \rangle\nonumber \\
&-\frac{\alpha}{ (1-2\delta)(1-  \frac{8 \alpha N}{\pi K})}\frac{1}{
\norm{\Psi_{\xi}}^4_{\LM}} \int \int_{\mathbb{R}^3
\times \mathbb{R}^3} \d x \d y \frac{\rho_{\xi} (x) \rho_{\xi}
(y)}{|x-y|}\bigg] -(1-2\delta)|\Lambda _P | \label{eqn:loc4}
\end{align}
for $\widetilde{\Psi}_{\xi} = \Psi_{\xi} / \norm{\Psi_{\xi}}_{\LM}.$ Above, we have used,
\begin{align}
\int_{\mathbb{R}^3} \d k \frac{\bar{\hat{f}}(k) \hat{g} (k)}{|k|^2} = \frac{1}{4 \pi}
\int \int_{\mathbb{R}^3 \times \mathbb{R}^3} \d x \d y \frac{\bar{f}(x)
g(y)}{|x-y|}.
\end{align}

The integrand enclosed in square brackets in the last two lines of
(\ref{eqn:loc4}) is just a constant times the PT functional for
$\widetilde{\Psi}_{\xi}$ with the coefficient $\alpha$ replaced by
$\alpha (1-  \frac{8 \alpha N}{\pi K} )^{-1} (1- 2 \delta )^{-1}$ and
$U$ replaced by $U  (1-  \frac{8 \alpha N}{\pi K} )^{-1}.$ 
%\textbf{More precisely,
%it is the extention of the PT functional on $\LM$ discussed in the Appendix. Therein we prove also that
%this extended Pekar Tomasevich functional has the same ground state energy as the original one.} 
Using this and applying
 (\ref{eqn:scaling}) we obtain the lower bound,
\begin{align}
&\langle \widetilde{\Psi}_{\xi} \vert \sum_{i=1} ^N p_i ^2 +
\frac{U}{1- \frac{8 \alpha N}{\pi K} } \sum_{i<j} \frac{1}{|x_i- x_j
|} \vert \widetilde{\Psi}_{\xi} \rangle  \nonumber \\
&- \frac{\alpha}{ (1- \frac{8
a N}{\pi K} ) (1-2\delta)} \frac{1}{ \norm{\Psi_{\xi}}^4_{\LM}} 
\int \int_{\mathbb{R}^3 \times \mathbb{R}^3} \d x \d y \frac{\rho_{\xi} (x) \rho_{\xi} (y)}{|x-y|} 
\geq \mathcal{E} ^{(N)} _{\widetilde{\nu}} (1) \frac{ \alpha ^2}{(1 -
\frac{8 \alpha N}{\pi K} )^2 (1 - 2\delta)^2},\label{eqn:fin1}
\end{align}
for $\widetilde{\nu} = U (1 - 2\delta ) / \alpha .$ The equality (\ref{eqn:fin2}) together with the bounds
(\ref{eqn:loc4}) and (\ref{eqn:fin1}) prove (\ref{eqn:last1}), thus completing the proof of Proposition \ref{prop:cluster}. \qed
\begin{comment}
\begin{align}
\langle \Psi \vert H_U ^{(N)} \vert \Psi \rangle &\geq \alpha^2
\mathcal{E}_{\widetilde{\nu}}^{(N)} (1) \frac{1}{(1 -2 \delta)^2 (1 -
\frac{8 \alpha N}{\pi K}) }
 - \frac{9 \alpha (R+d)^2 N^4 P^2 K}{ \pi  \delta} - (\frac{2K}{P}+1)^3 -\frac{1}{2}  \label{eqn:fin5}.
\end{align}
  Since our choice of $\Psi$ was arbitrary, the RHS of (\ref{eqn:fin5}) is a lower bound for $\widehat{E} (G_i ).$
  This proves Proposition \ref{prop:cluster}. \qed \end{comment}

{\bf Step 5} (Choice of parameters) Let us summarize the proof so
far. In Proposition \ref{prop:localization}, we proved that the
ground state energy $E^{(N)}_U ( \alpha )$ is bounded below by the infimum of
$\langle \Phi  \vert H^{(N)}_U \vert \Phi \rangle$ taken over wave
functions in $\mathcal{B}_R$ (the set of normalized wave functions with
electrons localized in boxes somewhere in space) minus an error term.
Subsequently, for an arbitrary wave function $\Phi$ in
$\mathcal{B}_R$ we separated the electrons into the clusters
$\{G_i \}_i$ and have shown in Lemma \ref{lem:fk} that the energy of $\Phi$ can
be bounded below by a sum of cluster energies, $\sum_i \widehat{E} (G_i )$,
minus an error term. In Proposition \ref{prop:cluster} we derived a
lower bound on each $\widehat{E} (G_i) $. It remains only to collect all of these bounds and choose values for the parameters (e.g., the $K_i$, $R$, etc.) that we have
introduced in the proof. We will subsequently obtain a lower bound for the ground state
energy.

 In the case $U > 2 \alpha$ we take $ d = \sqrt{3} R / (\nu -2 )$ and using equations \eqref{eqn:lowerb},
\eqref{eqn:lem1} and \eqref{eqn:clusterbound} we obtain that (recalling the definition of $F(N_i)$),
\begin{equation}\label{clusterreduc}
E_U^{(N)}(\alpha) \geq \min \sum_{i=1}^l \left(
F(N_i)-\frac{3 N_i \pi^2}{R^2}\right),
\end{equation}
where the minimum is taken over all the possible choices of positive
integers $N_1,...,N_l$, with $\sum_{i=1}^l N_i=N$, for any $l$. By taking the minimum, we have removed all reference to the original wave function $\Phi$, and therefore obtain a lower bound for the ground state energy. Recall that we are thinking of $F(N_i)$ as a function of the particle number $N_i$, so that the constants on the RHS of (\ref{eqn:clusterbound}) are thought of as depending on how many particles are in the cluster.

We will now make our choice of the parameters $K_i, R$, etc., so that (\ref{clusterreduc}) yields the claim (\ref{eqn:main1}) of Theorem \ref{main}. 
 Let us first explain the reasoning behind our first choice of parameters. In the $U > 2 \alpha$ regime, the Pekar-Tomasevich energy $\mathcal{E} ^{(N)}_U (\alpha )$ is of order  $N$
 \cite{FLST}.  As a consequence, the multiplicative errors appearing in (\ref{eqn:clusterbound}) (i.e., the $1 - 2\delta $ and $1-\frac{8 \alpha N}{ \pi K}$ terms) result in corrections to
the leading order term $\alpha^2 \mathcal{E}^{(N)}_\nu (1)$  which are of the order of $\alpha^2 N \delta $ and  $ \alpha^2 N \frac{8 \alpha N}{ \pi K}$. Therefore, due to \eqref{eqn:clusterbound}, the corrections contained in $F(N_i)-\frac{3 N_i \pi^2}{R^2}$ to the leading PT energy term are (up to constants),
\begin{align}
\frac{\alpha^3 N_i^2}{ K_i},\quad \alpha^2 N_i \delta_i , \quad\frac{N_i}{R^2}, \quad \frac{\alpha N_i^4 P_i^2 R^2 K_i}{\delta_i},\quad \frac{K_i^3}{ P_i^3 }.
\end{align}
In order to optimize our bounds, we choose the parameters so that the powers of $N_i$ and $\alpha$ appearing in the above terms are all equal. A simple calculation shows that we should take $K_i =
N_i ^{1/5} \alpha^{6/5}c_1$,
 $R = c_2\alpha^{-9/10}
N^{-2/5} $, $P_i = c_3 \alpha^{3/5} N_i ^{-2/5}$ and $\delta_i = c_4
\alpha^{-1/5}N_i ^{4/5} $ where $c_1,c_2,c_3,c_4 >0$ are constants. 
In order to ensure that the hypotheses on $\delta_i$ and $K_i$ stated
 in Proposition \ref{prop:cluster} are satisfied, we choose the constants to satisfy $\alpha > N^4 \times \max\{ ( 2\nu c_4 / (\nu - 2))^5 , (8/(\pi c_1))^5 \} $. There is a subtle point here. The leading order term in our lower bound is $\alpha ^2 \E ^{(N_i)} _{\widetilde \nu } (1)$, with $\widetilde \nu_i = (1 -2 \delta_i ) \nu$, which is slightly smaller than  the original $\nu$. The assumption that $\alpha > N^4 (2 \nu c_4 / (\nu - 2))^5$ ensures that $\widetilde \nu_i > 2$ so that $\E ^{(N_i)} _{\widetilde \nu } (1)$ is order $N_i$.

With this choice of constants, it follows that
\begin{align}
\mbox{RHS of } (\ref{clusterreduc}) &\geq \frac{ \alpha^2
\mathcal{E}_{\widetilde{\nu}}^{(N)} (1)}{(1 - \frac{2 c_4 N^{4/5}}{\alpha^{1/5}})^2 (1 -
\frac{8 N^{4/5}}{c_1 \pi \alpha^{1/5}}) }  -   \alpha^{9/5}  N^{9/5}\frac{9(\nu +\sqrt{3}-2)^2c_1c_2^2c_3^2 }{\pi c_4 (\nu  -2)^2}\nonumber \\
& - \left( \alpha^{3/5}N^{3/5}\frac{2 c_1 }{c_3}+1\right )^3 - \alpha^{9/5} N^{9/5}\frac{2 \pi^2}{c_2 ^2} - N\left(\frac{3}{2} + \frac{6 \alpha^{3/5}c_1}{c_3} \right) \label{eqn:finalbd}
\end{align} 
with $\widetilde \nu = (1 - 2 c_4 \alpha^{-1/5} N^{4/5})\nu$. Here we have used subaddivity of the PT ground state energy and also the fact that the PT ground state energy is negative. We have also used the elementary inequality $\sum_i N_i ^p \leq N^p$ for $p \geq 1$. Note that because it is a lower bound for the ground state energy, (\ref{eqn:finalbd}) alone confirms the asymptotic formula (\ref{eqn:limit}).

We are left with showing that the RHS of (\ref{eqn:main1}) is a lower bound for the RHS of (\ref{eqn:finalbd}). Clearly, (\ref{eqn:finalbd}) is of the same form as (\ref{eqn:main1}), except for the multiplicative errors in the leading order PT energy term. The terms appearing on the denominator of the leading order  term obviously result in corrections that are order $N^{9/5} \alpha^{9/5}$, leaving us with a lower   bound of form $\alpha^2 \E_{\widetilde \nu} ^{(N)} (1) - C N^{9/5} \alpha^{9/5}$. The PT energy    $ \E_{\widetilde \nu} ^{(N)} (1)$ (with a weaker Coulomb repulsion) can be bounded below by $ \E_{ \nu} ^{(N)} (1)$ (with the original stronger Coulomb repulsion) minus an error term which is order $(\nu - \widetilde \nu ) \E_{\nu} ^{(N)}(1) $ using the concavity of the function $\nu \to \mathcal E ^{(N)} _\nu (1) $.\footnotemark{\footnotetext{The concavity is due to the fact that the 
   ground state energy is, by definition, an infimum of affine functions of $\nu$.}} This proves the first claim of Theorem \ref{main}.

For general $\nu>0$ the Pekar-Tomasevich energy $\mathcal{E}_U^{(N)}(\alpha)$ is always at most of the order $ N^3$ \cite{BB}. To derive (\ref{eqn:main2}), we repeat the above argument, but use \eqref{eqn:Feyngencons} instead of \eqref{eqn:lem1} to bound $E^{(N)}_U (\alpha)$ by the cluster energies $\{ \widehat E (G_i) \}_i$. We choose for our parameters  $K_i = N_i \alpha^{27/23} c_1$, $R = c_2\alpha^{-19/23} N^{-1}$, $d=c_5 R$, $P_i =c_3 \alpha^{13/23} $ and
$\delta_i =c_4 \alpha^{-4/23} $. With these choices, we obtain an inequality similar to (\ref{eqn:finalbd}) except for the fact that $N^{9/5}$'s have become $N^3$'s and that there is an additional term arising from our application of the Feynman-Kac formula (i.e., the extra term on the RHS of (\ref{eqn:Feyngencons})) which contains a higher power of $\alpha$, namely $\alpha^{42/23}$. The same argument as that above then yields the RHS of (\ref{eqn:main2}) as a lower bound for $E^{(N)}_U (\alpha)$.  Note that we require $\alpha > \max \{(2c_4)^{23/4}, (8/(\pi c_1))^{23/4}\}$ in order to satisfy the hypotheses of Proposition \ref{prop:cluster}.  This completes the proof of Theorem \ref{main}.

\end{document}